\newcommand{\N}{ \mathbb{N} }
\def\suchthat{\, : \,}
\def\andd{\, \wedge\, }
\def\theory{\textrm{FO}}
\DeclareMathOperator{\per}{per}
\DeclareMathOperator{\prefx}{\textsc{Pref}}
\DeclareMathOperator{\suffx}{\textsc{Suff}}
\DeclareMathOperator{\Fac}{Fac}
\DeclareMathOperator{\earliestfac}{\textsc{EarliestFac}}
\DeclareMathOperator{\period}{\textsc{Per}}
\DeclareMathOperator{\factoreq}{\textsc{FactorEq}}
\DeclareMathOperator{\match}{\textsc{Match}}
\DeclareMathOperator{\prim}{\textsc{Primitive}}
\theoremstyle{plain}
	\newtheorem{thm}{Theorem}
	\newtheorem{notn}[thm]{Notation}
		\numberwithin{thm}{section}
	\newtheorem{lemma}[thm]{Lemma}
	\newtheorem{prop}[thm]{Proposition}
	\newtheorem{cor}[thm]{Corollary}
	\newtheorem*{thm*}{Theorem}
	\newtheorem*{lemma*}{Lemma}
	\newtheorem*{prop*}{Proposition}
	\newtheorem*{cor*}{Corollary}
	\newtheorem*{conj*}{Conjecture}
\theoremstyle{definition}
	\newtheorem*{example*}{Example}
	\newtheorem{defn}[thm]{Definition}
\author{Jason Bell\\Department of
Pure Mathematics\\
University of Waterloo\\
Waterloo, ON  N2L 3G1 \\
Canada\\
\href{mailto:jpbell@uwaterloo.ca}{\tt jpbell@uwaterloo.ca}
\and
Jeffrey Shallit\\
School of Computer Science\\
University of Waterloo\\
Waterloo, ON  N2L 3G1 \\
Canada\\
\href{mailto:shallit@uwaterloo.ca}{\tt shallit@uwaterloo.ca}}
\title{Automatic Sequences of Rank Two}
\begin{document}

\maketitle

\begin{abstract}
Given a right-infinite word $\bf x$ over a finite alphabet $A$, the \emph{rank} of $\bf x$ is the size of the smallest set $S$ of words over $A$ such that $\bf x$ can be realized as an infinite concatenation of words in $S$. We show that the property of having rank two is decidable for the class of $k$-automatic words for each integer $k\ge 2$.  
\end{abstract}

\section{Introduction}

Let $k\ge 2$ be an integer. In this paper we study $k$-automatic sequences, which are those sequences (or infinite words) $(a_n)_{n \geq 0}$ over a finite alphabet, generated by a deterministic finite automaton with output (DFAO) taking, as input, the base-$k$ representation of $n$
and outputting $a_n$.  Many interesting classical examples of sequences, including the Thue-Morse sequence, the Rudin-Shapiro sequence, and the paper-folding sequence are in this class. For more information about this well-studied class of sequences, see, for example, \cite{Allouche&Shallit:2003}.  We mention that there is another well-known characterization of the $k$-automatic sequences, as the image, under a coding, of the fixed point of a $k$-uniform morphism 
\cite{Cobham:1972}.  Here a morphism is called $k$-uniform if the length of the image of every letter is $k$, and a coding is a $1$-uniform morphism.

Let $x$ be a finite nonempty word.  We define $x^\omega$
to be the one-sided infinite word
$xxx\cdots$.
We say that an infinite word $\bf z$
is {\it ultimately periodic\/} if there
exist finite words $y, x$, with $x$
nonempty, such that ${\bf z} = y x^\omega$.
Honkala proved \cite{Honkala:1986} that
the following problem is decidable:  given a DFAO representing a $k$-automatic sequence $\bf x$, is $\bf x$ ultimately periodic?   In fact, later results showed that this is actually {\it efficiently\/} decidable;  see Leroux \cite{Leroux:2005} and 
Marsault and Sakarovitch \cite{Marsault&Sakarovitch:2013}.   For other related work, 
see \cite{Linna:1984,Pansiot:1986,Harju&Linna:1986,Lando:1991,Bell&Charlier&Fraenkel&Rigo:2009,Charlier&Massuir&Rigo&Rowland:2020,Durand:2013}.

Let $L$ be a language.   We define
$L^\omega$ to be the set of infinite words
$$ \{ x_1 x_2 \cdots \suchthat x_i \in L\setminus \{ \epsilon \} \}.$$
If ${\bf x} \in L^\omega$ for some finite language $L$ consisting of $t$ nonempty words, then we say that ${\bf x}$ is of \emph{rank} $t$.   In particular, deciding the ultimate periodicity of $\bf x$ is the same as deciding if some suffix of $\bf x$ is of rank one.  More generally, the rank serves as an invariant that gives some measure of the complexity of a word. 

While the rank is a useful invariant in the study of words, it can nevertheless be difficult to determine its precise value. In fact, there is some relationship with undecidable ``tiling'' problems such as the Post correspondence problem, which asks whether, given two finite sets of words of the same size, $\{a_1,\ldots ,a_m\}$ and $\{b_1,\ldots ,b_m\}$, over a common alphabet, there exist $k\ge 1$ and $i_1,\ldots ,i_k\le m$ such that $a_{i_1}\cdots a_{i_k}=b_{i_1}\cdots b_{i_k}$.  In general it is the fact that words cannot always be tiled unambiguously that complicates decision procedures involving tilings and such problems typically become more complex as the number of tiles involved increases.

Our main result is to show that the property of being of rank two is decidable for automatic words.
\begin{thm}\label{thm:main}
Let $k\ge 2$ be a positive integer and let $\bf x$ be a $k$-automatic sequence.  Then there is an algorithm to decide whether $\bf x$ is of rank two.
\end{thm}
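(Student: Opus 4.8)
The plan is to reduce the question to a computable search over candidate tile lengths, to decide each fixed-length instance with the standard decision procedure for automatic sequences, and finally to bound the lengths that must be examined. First I would separate \emph{rank exactly two} into \emph{rank at most two} together with \emph{not rank one}. A word has rank one exactly when it equals $u^\omega$ for a single nonempty $u$, which is a form of pure periodicity and hence decidable by Honkala's theorem \cite{Honkala:1986} and its effective refinements. So it suffices to decide whether $\mathbf{x}\in\{u,v\}^\omega$ for some nonempty words $u,v$.

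Next, to control the tiling ambiguity flagged in the introduction, I would invoke the defect theorem: if the two-element set $\{u,v\}$ is not a code, then $u$ and $v$ are both powers of a common word $z$, so that $\mathbf{x}=z^\omega$ has rank one. Thus in every genuinely rank-two situation $\{u,v\}$ is a code and the factorization of $\mathbf{x}$ into tiles is \emph{unique}. This is precisely what makes the two-tile case tractable where larger numbers of tiles, and the Post correspondence problem, are not.

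Now fix the tile lengths $a=|u|$ and $b=|v|$ as \emph{constants}. With $a,b$ fixed, the statement ``$\mathbf{x}\in\{u,v\}^\omega$ for some $u$ of length $a$ and $v$ of length $b$'' becomes expressible in the logic of automatic sequences: existentially quantify a set $P$ of cut positions, and assert that $0\in P$, that $P$ is infinite, that consecutive elements of $P$ differ by $a$ or by $b$ (a bounded-distance condition, hence first-order over $(\mathbb{N},<)$ for fixed $a,b$), and that all $a$-gaps carry equal length-$a$ factors while all $b$-gaps carry equal length-$b$ factors (finite conjunctions of symbol comparisons, measured against one fixed occurrence of each gap type). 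Since the lengths are constant, every block comparison is a finite conjunction, so the whole condition is an $\omega$-regular property of $\mathbf{x}$ and is decidable by the standard Büchi-type decision procedure for automatic sequences. Hence rank-two-with-lengths-$(a,b)$ is decidable for each individual pair $(a,b)$; note that for unbounded block lengths the block-equality comparisons would no longer be expressible, which is exactly why the parameters must be fixed.

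The main obstacle is then to obtain a computable bound $B$ with the property that if $\mathbf{x}$ has rank two at all, then it has rank two with $a,b\le B$; only finitely many pairs would then need testing. Here I would exploit the recurrence structure of automatic sequences together with the code property. If both tiles occur infinitely often, the cut positions recur and the tiles behave like return words to the prefix boundary; for (linearly) recurrent automatic sequences return words have length bounded by a computable function of the underlying DFAO, bounding $a$ and $b$. I would then reduce a general automatic sequence to its minimal components together with a transient prefix, handling the latter by a pumping argument on the DFAO: a very long tile forces a repeated state, which can be either shortened or shown to manufacture periodicity, and the case where one tile is used only finitely often collapses to the already-decidable periodicity question. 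Making this bound simultaneously uniform and effective across the non-recurrent case is the crux of the whole argument; the equal-length subcase $a=b$, in which the tiles are aligned blocks and the tile sequence is itself automatic by closure under arithmetic-progression subsequences, serves both as a guiding model and as the base of the analysis.
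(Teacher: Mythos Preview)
Your reduction to ``rank at most two and not rank one'' and your use of the defect theorem to force $\{u,v\}$ to be a code are both sound, and step~(A) is essentially correct: for \emph{fixed} tile lengths the property is $\omega$-regular and decidable for an automatic $\mathbf{x}$. The genuine gap is step~(B). You concede that obtaining a computable bound $B$ on $|u|+|v|$ is ``the crux of the whole argument,'' but the sketch you give does not establish it. Automatic sequences need not be uniformly recurrent, let alone linearly recurrent (e.g., the characteristic sequence of powers of $k$), so the return-word bound is unavailable in general; and the pumping idea---shorten a long tile by collapsing a state repetition in the DFAO---does not produce another tile compatible with the \emph{same} partner word and the \emph{same} infinite factorization. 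Without (B) the procedure does not terminate, and I do not see how to repair (B) along the lines you indicate.

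The paper's proof avoids this obstacle entirely by \emph{not} bounding $|u|$ or $|v|$. The point you are missing is that in $\theory(\N,+,n\mapsto\mathbf{x}[n])$ one can encode $u$ and $v$ by first-order variables $(i,r)$ and $(j,s)$ for their starting positions and lengths, so ``there exist $u,v$ with such-and-such prefix pattern'' is itself a decidable first-order sentence with no a priori length restriction (Propositions~\ref{prop:setup} and~\ref{prop:setup2}). What must then be bounded is not the \emph{size} of the tiles but the \emph{number} of tiles one needs to inspect. The argument splits into two cases. If one of $u,v$ has unbounded exponent in $\mathbf{x}$, there are only finitely many primitive candidates (Corollary~\ref{simple}), and for each such $u$ one decides whether a companion $v$ exists (Proposition~\ref{prop:uprim}). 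If neither has unbounded exponent, then Theorem~\ref{expbound} and the combinatorics of \S5 yield a computable constant $D=D(\mathbf{x})$ such that $\mathbf{x}\in\{u,v\}^\omega$ if and only if $\mathbf{x}$ has a prefix in $\{u,v\}^D$ (Lemma~\ref{lem:D}); one then loops over the $2^D$ binary patterns and, for each, uses the first-order decision procedure to ask whether \emph{some} $u,v$ of arbitrary length realize it. In short: bound the depth of the tiling you must witness, not the tiles themselves, and let the logical engine quantify over the lengths.
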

This algorithm is considerably more involved than the corresponding algorithm used for determining whether a word has rank one; moreover, we do not currently know how to extend our method to arbitrary suffixes of $\bf x$, nor to sequences of higher rank.

A key component in the procedure given in Theorem \ref{thm:main} is the following result, which shows that there is a striking dichotomy in the possible powers of words that can appear in a $k$-automatic sequence.
\begin{thm} 
Let $k\ge 2$ be a positive integer and let $\bf x$ be a $k$-automatic sequence.  Then there is a computable bound
$B$, depending only on $\bf x$, such that if $y^B$ occurs as a factor of $\bf x$ then
$y$ occurs with unbounded exponent in $\bf x$.  
\label{expbound}
\end{thm}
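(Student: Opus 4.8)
The plan is to exploit the structure theory of factors of automatic sequences, where the key tool is that the set of factors of a $k$-automatic sequence is itself tightly controlled by the underlying automaton, and that questions about powers and exponents are expressible in a decidable logical theory. Let me think about what "occurs with unbounded exponent" really means: for every integer $N$, the word $y^N$ occurs as a factor of $\mathbf{x}$. So the theorem says that once you see a sufficiently high (but bounded, $=B$) power of $y$, you are guaranteed that arbitrarily high powers of $y$ appear somewhere in $\mathbf{x}$.

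\begin{proof}[Proof proposal]
The plan is to argue via the logical/combinatorial structure of $k$-automatic sequences, using the fact that properties of factors and their powers are uniformly controlled by the size of the underlying DFAO. First I would fix the DFAO generating $\mathbf{x}$ and let $m$ denote its number of states; the eventual bound $B$ will be an explicit function of $m$ (and the alphabet size and $k$). The central idea is a pumping/scaling argument adapted to the base-$k$ self-similar structure: if a high power $y^B$ occurs as a factor, then the positions and length data describing this occurrence (the start position $p$, the period $|y|=q$, and the exponent $B$) can be encoded in base $k$, and the automatic nature of $\mathbf{x}$ lets one detect a repetition in the state behaviour that can be iterated to produce occurrences of $y^N$ for arbitrarily large $N$.

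Concretely, I would proceed in three steps. First, I would express the statement ``$y^j$ occurs as a factor of $\mathbf{x}$ starting at position $p$'' as a first-order formula in the Büchi arithmetic / automatic-words logic (using predicates of the form $\factoreq$, $\period$, and $\match$ that the paper has set up), so that the set of triples $(p, q, j)$ encoding valid occurrences of powers is recognized by a finite automaton reading base-$k$ representations in parallel. Second, I would invoke the standard fact that such a synchronized automaton, being finite, forces a pumping phenomenon: the set of achievable exponents $j$ for a fixed period $q$ is either bounded by a constant computable from the automaton, or else unbounded. This gives the dichotomy — there is a threshold $B$, computable from the automaton's size, such that no intermediate behaviour is possible. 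Third, I would quantify: set $B$ to be one more than the maximal finite exponent achievable across all bounded cases, which is bounded in terms of $m$ by the pumping constant of the synchronized automaton. Then if $y^B$ occurs, the occurrence falls outside every ``bounded'' regime, and so $y$ must in fact occur with unbounded exponent.

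The main obstacle I anticipate is making the pumping step produce \emph{higher powers of the very same word $y$}, rather than merely some long periodic factor with the same period length $q$. A naive pumping argument on positions could yield a factor that is periodic with period $q$ but whose underlying period word is a conjugate (cyclic shift) of $y$ rather than $y$ itself; one must ensure the alignment is preserved so that the iterated occurrences genuinely extend the power $y \cdot y \cdots y$. I would handle this by tracking the residue of the starting position modulo $q$ together with the automaton state, so that the pumped occurrences are phase-aligned with the original, guaranteeing that the period word is exactly $y$ up to the correct rotation; since there are only finitely many (state, residue) pairs, the pigeonhole/pumping argument still goes through, at the cost of enlarging $B$ by a factor depending on $q$. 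A secondary subtlety is obtaining a bound $B$ that is \emph{uniform}, i.e.\ independent of $y$ and $q$; this requires the pumping constant to be read off from the fixed synchronized automaton rather than from any $y$-dependent construction, which is why encoding everything in a single automatic relation at the outset is essential.
\end{proof}
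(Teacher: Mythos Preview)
Your framework is essentially the same as the paper's---encode the relevant power/period data in a first-order formula, build the synchronized automaton, and pump---but the proposal has a genuine gap at exactly the point you flag as an obstacle, and your proposed fix does not close it.

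The difficulty is this: in your encoding the triple is $(p,q,j)$ with $p$ the starting position of the occurrence of $y^{j}$. Nothing bounds $p$ in terms of $q$; the high power could sit arbitrarily far out in $\mathbf{x}$. So when you pump the base-$k$ representation of $(p,q,j)$, the high-order digits you duplicate will in general belong to $p$ (or to $p$ and $j$ simultaneously), not to $j$ alone. You then get new accepted triples, but with a different starting position and no guarantee that the period word at that new position is $y$ rather than some other word of length $q$. Your suggested repair---track the residue of $p$ modulo $q$ in the state---cannot be implemented in a fixed finite automaton because $q$ is a variable, and you yourself note that doing it per $q$ destroys uniformity of $B$. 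There is also a subtler issue: your dichotomy ``for fixed $q$ the achievable exponents are bounded or unbounded'' is about the period \emph{length}, not about a specific $y$; different words of the same length can behave differently, so this dichotomy does not yield the statement.

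The paper resolves this with one extra idea you are missing: instead of recording the position where the long power sits, record the position $i$ of the \emph{earliest} occurrence of the period word $y$ in $\mathbf{x}$. By the linear-recurrence property of automatic sequences there is a computable constant $C$ with $i\le C\,|y|=Cq$, so $i$ and $q$ are of comparable magnitude. The formula then asserts ``$i$ is the first occurrence of some length-$q$ word, and somewhere in $\mathbf{x}$ there is a factor of length $n$ with period $q$ whose first $q$ letters equal $\mathbf{x}[i..i+q-1]$''; the position of that long factor is existentially quantified away. Now the accepted triples are $(i,q,n)$ with $i\le Cq$, so once $n/q$ exceeds $k^{r}C$ (where $r$ is the number of states of the synchronized automaton) the base-$k$ representation has at least $r$ leading symbols that are zero in the $i$- and $q$-tracks and nonzero only in the $n$-track. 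Pumping there increases $n$ while leaving $i$ and $q$ fixed, which is exactly what is needed: the identity of $y$ is pinned down by $(i,q)$, and one obtains factors with period $q$, first block equal to $y$, and length $n$ as large as desired. Taking $B=k^{r}C$ gives the uniform computable bound.
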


The outline of this paper is as follows.  In \S2 we give the notation that will be used throughout the paper.  In \S3, we provide the necessary background on repetitive words.  In \S4, we recall a key result in first-order logic and use it to deduce Theorem \ref{expbound} along with a key technical result that will be used in the proof of Theorem \ref{thm:main}.  Then in \S5, we prove the key combinatorial lemmas that will be used in giving the decision procedure in Theorem \ref{thm:main}.  Finally, in \S6, we give the proof of Theorem \ref{thm:main}.

\section{Notation and definitions}
Throughout this paper we will make use of the following notation and definitions.

If $w = xyz$ for words $w,x,y,z$ with $w$ and $z$ possibly infinite, we say that $y$ is a {\it factor\/} of $w$, $x$ is a {\it prefix\/} of $w$, and $z$ is a {\it suffix\/} of $w$.

Let ${\bf x} = a_0 a_1 a_2 \cdots$ be an infinite word.  By ${\bf x}[i..i+n-1]$ we mean the length-$n$ word $a_i \cdots a_{i+n-1}$.  By $\Fac({\bf x})$ we mean $\{ {\bf x}[i..i+n-1] \suchthat i, n \geq 0 \}$, the set of all finite factors of $\bf x$.

Given a finite word $w = a_1 a_2 \cdots a_n$
we say that $w$ is of \emph{period} $p$ if
$a_i = a_{i+p}$ for $1 \leq i \leq n-p$.  A word can have multiple periods; for example, the French word {\tt entente} has periods $3, 6, $ and $7$.
We refer to the smallest positive period as \textit{the\/} period of $w$, and denote it by $\per (w)$.   The \textit{exponent\/} of a finite word $w$ is defined to be $\exp(w) = |w|/\per(w)$.  

A finite nonempty word $w$ is {\it primitive} if it is a non-power, that is, if it cannot be written as $w = x^e$ for some $e \geq 2$.   
If $w$ appears in $\bf x$ to arbitrarily large powers, we say that $w$ is {\it of unbounded exponent\/} in $\bf x$.  If $\bf x$ has only finitely many primitive factors of unbounded
exponent, we say it is {\it discrete\/}.

We assume the reader has a basic background in formal languages and finite automata theory.   For the needed concepts, see, for example,
\cite{Hopcroft&Ullman:1979}.

\section{Repetitive words}

An infinite word $\bf x$ is called
{\it repetitive\/} if for all $n$ there exists a finite nonempty word $w$ such that
$w^n \in \Fac({\bf x})$.  It is called
{\it strongly repetitive\/} if there exists a finite nonempty word $w$ such that
$w^n \in \Fac({\bf x})$ for all $n$.

Suppose $h$ is a morphism and $a$ is a letter such that $h(a) = az$ for some $z$ for which $h^i(z) \not= \epsilon$ for all $i$.    Then we say that $h$ is
{\it prolongable\/} on $a$.  In this case
$h^\omega(a) := a \, z \, h(z) \, \ldots $
is an infinite word that is a fixed point of $h$, and we say that $h^\omega(a)$ is a {\it pure morphic\/} word.   If $\bf x$ is the image, under a coding, of a pure morphic word, we say that $\bf x$ is
{\it morphic}.

Several writers have investigated the repetitive and strongly repetitive properties of pure morphic words.   Ehrenfeucht and Rozenberg \cite{Ehrenfeucht&Rozenberg:1983} showed that a pure morphic word is repetitive if and only if it is strongly repetitive, and also showed that these conditions are decidable.   Mignosi and S\'e\'ebold \cite{Mignosi&Seebold:1993a} proved that for every morphic word $\bf x$ there exists a constant $M$ such that $w^M \in 
\Fac({\bf x})$ if and only if 
$w^n \in 
\Fac({\bf x})$ for all $n$; also see
\cite{Klouda&Starosta:2019}.
Kobayashi and Otto \cite{Kobayashi&Otto:2000} gave an efficient algorithm to test the repetitivity of a pure morphic word.   Klouda and Starosta \cite{Klouda&Starosta:2015} showed further that every pure morphic word $\bf x$ is discrete, while the authors \cite{Bell&Shallit:2021} proved that words of linear factor complexity are discrete.

Only the last of these results applies to the case that concerns us in this paper (where $\bf x$ is $k$-automatic) because automatic words need not be pure morphic.  For
example, it is not hard to show that the
Rudin-Shapiro sequence is not pure morphic.

Recently, in a thus-far unpublished manuscript, Klouda and Starosta
\cite{Klouda&Starosta:2021}  showed that
morphic words (and hence $k$-automatic words) are discrete.

\section{First-order logic}

Certain key parts of the decision procedure in Theorem \ref{thm:main} will rely on the following result, which is essentially a consequence of the results of Bruy{\`e}re et al.~\cite{Bruyere&Hansel&Michaux&Villemaire:1994}; also see 
\cite{Charlier&Rampersad&Shallit:2012}.
\begin{thm}
Let $\bf x$ be a $k$-automatic sequence, and let $\varphi$ be a first-order logic formula,  expressible in $\theory(\N, +, n \rightarrow {\bf x}[n])$.  Then
\begin{itemize}
    \item[(a)] If $\varphi$ has no unbound variables, then the truth of $\varphi$ is decidable. 
    \item[(b)] If $\varphi$ has unbound variables, we can computably determine a DFA that recognizes precisely the base-$k$ representations of those natural number values of the unbound variables that make $\varphi$ true.
\end{itemize}
\label{bruyere}
\end{thm}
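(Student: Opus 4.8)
The plan is to argue by structural induction on the formula $\varphi$, associating to each subformula a finite automaton that recognizes exactly the satisfying assignments of its free variables. Throughout I would fix the convention that a tuple of natural numbers $(n_1,\ldots,n_m)$ is encoded as a single word over the product alphabet $\{0,1,\ldots,k-1\}^m$ by writing the base-$k$ expansions of the $n_i$ in parallel, padding the shorter ones with leading zeros so that all $m$ tracks have a common length. For a subformula $\psi$ with free variables among $y_1,\ldots,y_m$, the inductive claim is that one can compute a DFA $A_\psi$ over this product alphabet accepting precisely those padded encodings $(n_1,\ldots,n_m)$ for which $\psi$ holds. Since the encoding is insensitive to extra leading zeros, I would additionally maintain the invariant that $L(A_\psi)$ is closed under prepending and deleting all-zero columns, so that the convention is respected uniformly.

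For the base cases I would handle the atomic formulas. The graph of addition, $\{(x,y,z)\suchthat x+y=z\}$, is recognized by a finite automaton that scans the three base-$k$ tracks while tracking the single carry digit in its state; this standard \emph{adder} construction is the essential arithmetic input and gives a regular relation (the direction of scanning being immaterial, as regular languages are closed under reversal). Equality and the order relation are recognized similarly. The crucial nonarithmetic atom is the predicate $\mathbf{x}[n]=a$ for a fixed letter $a$: because $\mathbf{x}$ is $k$-automatic it is by definition produced by a DFAO reading base-$k$ representations, so the set $\{n\suchthat \mathbf{x}[n]=a\}$ has regular base-$k$ representation, and a DFA for it is immediate from the DFAO.

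For the inductive step I would realize each logical operation by a corresponding automaton construction, first cylindrifying the automata of the subformulas to a common list of tracked variables. Conjunction and disjunction become the product (intersection) and union of DFAs. Negation becomes complementation, which is where determinism is essential and is the reason for insisting on DFAs at every stage. Existential quantification $\exists y_i\,\psi$ is handled by projecting away the $y_i$ track, which produces a nondeterministic automaton, followed by the subset construction to return to a DFA; the leading-zero invariant guarantees that projection correctly captures \emph{there exists a value}, rather than merely \emph{there exists a representation of a fixed length}. Universal quantification is then obtained from $\neg\exists\neg$.

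Finally, part (a) follows by specializing to a sentence: here $m=0$, so $A_\varphi$ is an automaton over a one-letter (or empty) alphabet and one simply tests whether it accepts, which decides the truth of $\varphi$. Part (b) is exactly the inductive claim applied to the top-level formula with its free variables left unquantified. The step I expect to be the main obstacle is not any single construction in isolation but the uniform bookkeeping of the leading-zero convention through negation and, especially, projection: one must verify that complementation and existential projection preserve closure under padding, since otherwise the automaton could accept some representations of a number but not others, and the equivalence between \emph{$\varphi$ holds} and \emph{$A_\varphi$ accepts the padded encoding} would break. The determinization blowup incurred at each projection is harmless for decidability and computability, which is all that is claimed.
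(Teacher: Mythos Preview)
Your proof sketch is correct and follows the standard B\"uchi--Bruy\`ere argument: structural induction on formulas, with regular base-$k$ relations for the atomic predicates (the carry automaton for addition, the given DFAO for $n\mapsto{\bf x}[n]$), and closure of regular relations under Boolean operations, cylindrification, and projection for the inductive step. Your attention to the leading-zero invariant through complementation and projection is exactly the right technical care.

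However, there is nothing to compare against: the paper does not prove this theorem. It is stated as a known result, attributed to Bruy\`ere, Hansel, Michaux, and Villemaire (with a secondary reference to Charlier, Rampersad, and Shallit), and then used as a black box throughout \S4. So your write-up is not an alternative to the paper's proof but rather a supplied proof where the paper simply cites one. If your intent was to make the paper self-contained at this point, what you have written is an accurate outline of the standard argument; if your intent was to match the paper's own reasoning, you should instead just cite the references as the paper does.
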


As an application, for automatic sequences we get easy proofs of the decidability of the repetitive and strongly repetitive properties.

\begin{thm}
Let $\bf x$ be a $k$-automatic sequence, and let $z$ be a given nonempty word.   Then the following problems are decidable:
\begin{itemize}
    \item[(a)] Do arbitrarily large powers of $z$ appear in $\bf x$?
    \item[(b)] If the answer to (a) is no, what is the largest exponent $e$ such that $z^e$
    appears in $\bf x$?
\end{itemize}
\end{thm}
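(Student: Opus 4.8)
The plan is to reduce both problems to the decision and synthesis procedures of Theorem~\ref{bruyere} by expressing the occurrence of an arbitrary power $z^e$ as a first-order formula. Write $z = z_0 z_1 \cdots z_{m-1}$, where $m = |z|$ is a fixed positive integer and each $z_s$ is a known letter. The first step is to encode ``the length-$n$ factor of ${\bf x}$ beginning at position $i$ equals $z^{n/m}$.'' The obstacle here is that the exponent is a variable, so we cannot simply write a conjunction of variable length asserting that consecutive length-$m$ blocks all spell $z$. Instead I would use the standard characterization of a power in terms of periodicity: a length-$n$ word is a power of $z$ precisely when $m \mid n$, its first $m$ symbols spell $z$, and it has period $m$. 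Concretely, I would set
\[
\mathrm{Pow}(i,n) \;:=\; (m \mid n) \;\wedge\; \bigwedge_{s=0}^{m-1} {\bf x}[i+s] = z_s \;\wedge\; \forall j\,\bigl( (i \le j \wedge j+m < i+n) \to {\bf x}[j] = {\bf x}[j+m] \bigr),
\]
noting that $m \mid n$ and the fixed-length conjunction $\bigwedge_{s} {\bf x}[i+s]=z_s$ are both expressible in $\theory(\N,+,n \to {\bf x}[n])$ because $m$ and the letters $z_s$ are constants. Then $\mathrm{Pow}(i,n)$ holds exactly when $z^{n/m}$ occurs at position $i$.

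For part (a), I would observe that arbitrarily large powers of $z$ occur in ${\bf x}$ if and only if the closed formula
\[
\forall e\;\exists i\;\exists e'\;\bigl( e' \ge e \;\wedge\; \mathrm{Pow}(i, m\cdot e') \bigr)
\]
is true, where $m \cdot e'$ abbreviates the $m$-fold sum $e' + \cdots + e'$. This formula has no free variables, so its truth is decidable by Theorem~\ref{bruyere}(a), settling (a).

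For part (b), suppose the answer to (a) is no, so that the set $E = \{\, e \ge 1 : z^e \in \Fac({\bf x}) \,\}$ is bounded and hence finite. Applying Theorem~\ref{bruyere}(b) to the formula $\psi(e) := \exists i\, \mathrm{Pow}(i, m\cdot e)$, which has the single free variable $e$, I obtain a DFA accepting precisely the base-$k$ representations of the elements of $E$. Since $E$ is finite, the accepted language is finite, and the largest element of a finite set of integers presented by a DFA is computable (for instance by first checking that the trim automaton is acyclic and then reading off the numerically largest accepted representation). That maximum is the desired largest exponent.

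The genuinely delicate point is the first step: capturing a power of unbounded, variable exponent with a single fixed first-order formula, which the periodicity encoding accomplishes without a variable-length conjunction. Once that encoding is in place, both parts follow immediately from the two halves of Theorem~\ref{bruyere}, with part (a) requiring only the decision procedure and part (b) the automaton-synthesis procedure together with the elementary fact that maxima of finite regular integer languages are computable.
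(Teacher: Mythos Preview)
Your proposal is correct and follows essentially the same approach as the paper: encode ``$z^{n/m}$ occurs at position $i$'' via the periodicity characterization (first block spells $z$, the whole factor has period $m$), then invoke the two halves of Theorem~\ref{bruyere}. The only noteworthy divergence is in part~(b): the paper reads the question as asking for the largest \emph{fractional} exponent $e=m/r$ (dropping the divisibility constraint and returning the maximal length $m$ of a factor with period $r=|z|$ beginning with $z$), whereas your $\mathrm{Pow}$ enforces $m\mid n$ and so returns only the largest \emph{integer} exponent. Your machinery adapts immediately---just remove the $m\mid n$ clause and output $n/m$---so this is a difference of interpretation rather than of method; but since the paper explicitly computes the fractional exponent, you may want to make that adjustment.
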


\begin{proof}
\leavevmode
\begin{itemize}
    \item[(a)]  Let $z = a_1 a_2 \cdots a_r$.  We can write a first-order statement asserting that $z$ is a factor of $\bf x$ as follows:
    \begin{equation}
    \exists i \ {\bf x}[i] = a_1 \andd 
    {\bf x}[i+1] = a_2 \andd \cdots \andd
    {\bf x}[i+r-1] = a_r ,
    \label{andy}
    \end{equation}
    and so it is decidable if this is the case.
    
    Let us now create a formula asserting that
    the length-$m$ factor beginning at position $j$ has period $r$, and ${\bf x}[i..i+r-1] = {\bf x}[j..j+r-1]$:
    \begin{align*}
        \factoreq(i,j,n) &= \forall t\leq n \ {\bf x}[i+t]={\bf x}[j+t] \\
    \match(i,j,m,r) &= \factoreq(i,j,r) \andd \factoreq(j,j+r,m).
    \end{align*}
    
    If $z$ does indeed appear in $\bf x$,
    we can identify
    some $i$ for which Eq.~\ref{andy} holds.  Then
    arbitrarily large powers of $z$ appear in $\bf x$ if and only if
    $$ \forall m \ \exists j \ \exists n>m 
   \match(i,j,m,r).$$
   Here $i$ and $r$ are constants and not free variables.
    
    \item[(b)]   If $z$ occurs in $\bf x$, but not with arbitrarily large powers, then we can determine the largest (fractional) power $z^e$ occurring in $\bf x$ as follows:  create the DFA corresponding to the logical formula
    $$\exists m \ (\exists j \ \match(i,j,m,r) \andd \neg\exists j' \ \match(i,j',m-1,r)) .$$
    Again, $i$ and $r$ are constants and not free variables.
    This DFA will accept the base-$k$ representation
    of exactly one $m$, and then 
    $e = m/r$.   
    \end{itemize}
\end{proof}

As an immediate consequence we can prove Theorem \ref{expbound}.
\begin{proof}[Proof of Theorem \ref{expbound}]
Let $M$ be a $k$-DFAO generating the sequence $\bf x$.  As is well-known \cite[Thm.~14]{Goc&Schaeffer&Shallit:2013}, for every automatic sequence $\bf x$
there is a computable constant $C$ such that
if $y$ appears as a factor of $\bf x$, it must appear starting
at a position that is $\leq C|y|$.

Now consider the following first-order formula $\varphi(i,n,p)$:
$$ \exists j \ \earliestfac(i,j,p) \ \wedge\ \period(j,n,p) ,$$
where
\begin{align*}
\period(i,n,p) &= \factoreq(i,i+p,n-p) \\
\earliestfac(i,j,n) &= \factoreq(i,j,n) \ \wedge\
\forall t \ \factoreq(t,j,n) \implies t\geq i.
\end{align*}
The formula $\varphi$ asserts that there exists some $j$ such that
\begin{itemize}
\item[(i)] ${\bf x}[i..i+p-1] = {\bf x}[j..j+p-1]$;
\item[(ii)] $i$ is the smallest index for which (i) holds;
\item[(iii)] 
${\bf x}[j..j+n-1]$ has period $p$.
\end{itemize}
Notice that this implies that ${\bf x}[j..j+n-1]$ has exponent
at least $n/p$.

From $M$ we can computably determine a DFA $M'$ accepting
those triples $(i,p,n)_k$ in parallel making $\varphi(i,n,p)$ true.
Suppose $M'$ has $r$ states.
We now claim that $M'$ accepts $(i,p,n)_k$ with $n/p > k^r C$
if and only if $\bf x$ contains arbitrarily large powers of ${\bf x}[i..i+n-1]$.

One direction is trivial.   For the other direction, consider
the base-$k$ representation of the triple $(i,p,n)$.
From the discussion above we know that $i \leq Cp$.
Since $n > k^r Cp$ we have $n > k^r i$, and hence the
base-$k$ representation of $(i,p,n)$ starts with at least
$r$ $0$'s in the components corresponding to $i$ and $p$
and a nonzero digit in the $n$ component.   We may now
apply the pumping lemma to $z = (i,p,n)_k$ to get longer and
longer strings with the same value of $i$ and $p$, but
arbitrarily large $n$.   From the definition of $M'$
this means that there is an infinite sequence of
increasing $n$ for which there exists a $j$
with ${\bf x}[j..j+n-1]$ of exponent at least $n/p$.
We may now take $B = k^r C$ to prove the result.
\end{proof}

We can also prove that $k$-automatic words are discrete, and even something more general.  The following result appears in earlier work of the authors \cite[Theorem 1.2]{Bell&Shallit:2021}, but we give a simpler and more self-contained proof here. We note, however, that the following proof does not recover the upper bound on the number of primitive factors that can occur with unbounded exponent (up to cyclic equivalence), which was given in \cite{Bell&Shallit:2021}.
\begin{thm}
Let $\bf x$ be an infinite word with linear subword complexity.  Then $\bf x$ is discrete.
\end{thm}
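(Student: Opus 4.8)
The plan is to argue by contradiction: assume $\bf x$ has infinitely many primitive factors of unbounded exponent, and deduce that its factor-complexity function $p(n) = |\Fac({\bf x}) \cap A^n|$ grows faster than linearly. Since there are only finitely many words of each length over the finite alphabet $A$, infinitely many such primitive factors must have pairwise distinct lengths, so I fix primitive words $w_1, w_2, \ldots$ of unbounded exponent with strictly increasing lengths $\ell_1 < \ell_2 < \cdots$. I may also assume $\bf x$ is not ultimately periodic, since an ultimately periodic word has only finitely many primitive factors of unbounded exponent and is trivially discrete.

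The heart of the argument is a lower bound on the first difference $s(n) := p(n+1) - p(n)$, which counts (with multiplicity) the \emph{right-special} factors of length $n$, that is, factors $u$ admitting two distinct right extensions $ua, ua'$ in $\Fac({\bf x})$. For each index $i$ and each $n \ge 2\ell_i$ I will produce one such right-special factor of minimal period exactly $\ell_i$. Because $w_i$ has unbounded exponent and $\bf x$ is not ultimately periodic, there is a \emph{maximal} factor $R$ of $\bf x$ that is periodic with period $\ell_i$ and has length $L \ge n + \ell_i$; let $u$ be the length-$n$ suffix of $R$. Periodicity forces $u$ to occur again one period to the left inside $R$, where it is followed by the period-continuing letter, whereas at the right end of $R$ it is followed by the letter that breaks the period; maximality of $R$ guarantees that these two follower letters differ, so $u$ is right-special. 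Since $n \ge 2\ell_i$ and $u$ is a length-$n$ window of a period-$\ell_i$ word, a short Fine--Wilf computation shows its minimal period is exactly $\ell_i$; hence the right-special factors produced for distinct indices $i$ are pairwise distinct. This yields $s(n) \ge g(n)$, where $g(n) := |\{\, i : 2\ell_i \le n \,\}| \to \infty$.

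Summing first differences then finishes the proof: $p(n) \ge \sum_{m < n} s(m) \ge \sum_{m < n} g(m) \ge \tfrac{n}{2}\, g(\lfloor n/2 \rfloor)$, and since $g(\lfloor n/2 \rfloor) \to \infty$ we obtain $p(n)/n \to \infty$, contradicting the assumed linear bound $p(n) \le Cn$. (Alternatively, the unboundedness of $s(n)$ contradicts linear complexity directly via Cassaigne's theorem, but the explicit summation keeps the proof self-contained.)

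The step I expect to be the main obstacle is the construction and \emph{distinctness} of these right-special factors: one must set up the maximal period-$\ell_i$ run correctly, verify via maximality that the two follower letters genuinely differ (so that $u$ is right-special rather than merely periodic), and confirm through the periodicity bookkeeping that the minimal period is exactly $\ell_i$, so that the factors contributed by distinct periods cannot coincide. Once this local combinatorial lemma is in place, the global counting is routine.
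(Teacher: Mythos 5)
Your proof is correct, and it takes a genuinely different route from the one in the paper. The paper's argument is a one-shot count at a single length: it selects $2c+1$ of the primitive unbounded-exponent factors with distinct lengths, replaces them by powers $y_i$ of a common length $d>c$, sets $n=3d+1$, and shows that the $2d+1$ length-$n$ windows sliding across an occurrence of each $y_i^3$ (arranged, after a cyclic shift, to be followed by a period-breaking letter) are pairwise distinct within each list and across lists, producing more than $cn$ factors of length $n$, a contradiction. You instead run the special-factor machinery: from a right-maximal period-$\ell_i$ run you extract, for every $n\ge 2\ell_i$, a right-special factor of length $n$ of minimal period exactly $\ell_i$, so distinct periods give distinct right-special factors, the first difference $p(n+1)-p(n)$ tends to infinity, and summing contradicts linearity. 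The step you flagged as the main obstacle does go through: a right-maximal run exists because $\bf x$ is not ultimately periodic, and if the length-$n$ suffix $u$ of the run had a period $q<\ell_i$, then Fine--Wilf applied to a conjugate square $\tilde{w}_i^{\,2}$ occurring inside $u$ (available since $n\ge 2\ell_i$) would force a period properly dividing $\ell_i$, contradicting primitivity of $w_i$, since conjugates of primitive words are primitive. Comparing the two approaches: yours requires the explicit reduction to the non-ultimately-periodic case (otherwise maximal runs need not terminate) and the Fine--Wilf bookkeeping, but in exchange proves the stronger statement that the first differences of the complexity function are unbounded, and it produces right-special factors of prescribed minimal period at every sufficiently large length, a reusable local lemma; the paper's proof avoids special factors and first differences entirely, needing only elementary mismatch arguments, and the two are quantitatively comparable, each bounding by roughly $2c$ the number of distinct lengths (equivalently, minimal periods) of primitive factors of unbounded exponent.
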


\begin{proof}
Suppose that the subword complexity of $\bf x$ is bounded by $cn$ for some constant $c$ and $n \geq 1$.   Further suppose, contrary to what we want to prove, that $\bf x$ has infinitely many primitive factors $x_1, x_2, \ldots, x_{2c+1}$ for which arbitrarily large powers appear in $\bf x$.

Choose $2c+1$ of them that are strictly increasing in size,
$1 \leq |x_1| < |x_2| < \cdots < |x_{2c+1}|$.  Replace each $x_i$
with $y_i$, an appropriate power of the $x_i$ so that all the $y_i$ are
the same length $d >c$.   (The $y_i$ are no longer primitive, but it doesn't matter.)

Pick $n = 3d+1$.   For each word $y_i$ find an occurrence of a power of $y_i$ in $\bf x$ to the 3rd power.
Without loss of generality we may
assume (by replacing a $y_i$ by a cyclic shift of it, if necessary)
that the occurrence of this $y_i^3$ in $\bf x$ is followed by a letter $a$ different from the first letter of $y_i$.

For each $i$, we take a suffix ${\bf x}_i$ of $\bf x$ that has $y_i^3$ as a prefix. Now consider the list $L_i$ of the $2|y_i|+1$ length-$n$ factors of $\bf x$ that start at position $p$ of ${\bf x}_i'$ for $p=0,\ldots ,2|y_i|$.  
that starts $y_i$.   For each $y_i$ there are $2|y_i|+1 = 2d+1$ such words.   Suppose two of these shifted words agree.
Then either $y$ matches a shift of $y$, forcing a mismatch at the letter $a$, or $y$ doesn't match a shift of $y$, forcing a mismatch within the first $|y|$ symbols.

Similarly, if we compare a word of $L_i$ and a word of $L_j$ for $i \not= j$, then the fact that they agree on a prefix
of at least one repetition of the same length-$d$ word would imply $y_i$ is a cyclic shift of $y_j$, impossible since the original $x_i$ were
distinct primitive words.

So we have constructed at least $2d(2c+1) = 4cd + 2d > c(3d+1) $ different length-$n$ subwords of $\bf x$, a contradiction.
\end{proof}

\begin{cor}
If $\bf x$ is $k$-automatic, we can explicitly determine the (finitely many) primitive factors $w$ such that $w$ is of unbounded exponent in $\bf x$.
\label{simple}
\end{cor}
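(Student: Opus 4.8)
The plan is to combine the finiteness supplied by the preceding discreteness theorem with the effectiveness of the first-order machinery of Theorem~\ref{bruyere}. Discreteness tells us that only finitely many primitive factors occur with unbounded exponent, so the only thing missing is a \emph{computable bound on their lengths}; once such a bound is in hand, a finite search suffices.

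First I would express the two relevant properties as formulas over $\theory(\N,+,n\to{\bf x}[n])$. Writing $w={\bf x}[i..i+p-1]$ for the length-$p$ factor at position $i$, the property that $w$ occurs with unbounded exponent is
\[
U(i,p)\ :=\ \forall m\ \exists j\ \exists n\ \big(n>m \andd \factoreq(i,j,p-1) \andd \period(j,n,p)\big),
\]
which asserts that a copy of $w$ appears as the length-$p$ prefix of an arbitrarily long factor of period $p$, i.e. that powers of $w$ of unbounded exponent occur. (Equivalently, one may invoke the computable bound $B$ of Theorem~\ref{expbound} and merely demand that $w^B$ occur.) Primitivity of $w$ is likewise expressible: $w$ is a proper power precisely when it is fixed by some nontrivial cyclic rotation, so
\[
\prim(i,p)\ :=\ \neg\,\exists s\ \big(0<s<p \andd w \text{ is fixed by the cyclic rotation by } s\big),
\]
where the rotation condition is spelled out by a case split on whether $s+t<p$, comparing ${\bf x}[i+t]$ with ${\bf x}[i+s+t]$ or ${\bf x}[i+s+t-p]$ accordingly. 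This uses only addition and comparison of indices—no multiplication or divisibility—so $\prim(i,p)$ lies within $\theory(\N,+,n\to{\bf x}[n])$.

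Next I would form the single-free-variable formula $\psi(p):=\exists i\,(\prim(i,p)\andd U(i,p))$ and invoke Theorem~\ref{bruyere}(b) to construct a DFA $D$ recognizing the base-$k$ representations of exactly those $p$ for which $\psi(p)$ holds; this is precisely the set of \emph{lengths} of primitive factors of unbounded exponent. By the preceding discreteness theorem this set is finite, so $D$ accepts a finite language, and its maximal element $L$ is computable from $D$ (finiteness of a regular language, and its largest element, are decidable from any accepting DFA). Every primitive factor of unbounded exponent therefore has length at most $L$.

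Finally, the bound $L$ reduces the problem to a finite search. Using the fact cited in the proof of Theorem~\ref{expbound}, that every length-$m$ factor of $\bf x$ already occurs at a position $\le Cm$, I enumerate the finitely many distinct factors of $\bf x$ of each length $m\le L$ (scanning positions $0,\ldots,Cm$), and for each candidate $w$ I test primitivity, a finite check, and unbounded exponent, decidable for a fixed word by the decision procedure given above. Outputting those $w$ that pass both tests yields the desired finite list. The one essential ingredient—without which the search would not terminate—is the discreteness theorem, which guarantees that the length set recognized by $D$ is finite and hence that $L$ exists; everything else is a routine assembly of the first-order decision procedures already established.
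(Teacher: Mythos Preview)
Your proposal is correct and essentially matches the paper's approach: both express primitivity and unbounded exponent in $\theory(\N,+,n\to{\bf x}[n])$, apply Theorem~\ref{bruyere} to build a DFA, and lean on the discreteness theorem for finiteness. The only difference is cosmetic: the paper builds the DFA directly for the pairs $(i,p)$, adding an earliest-occurrence constraint so that each primitive $w$ of unbounded exponent corresponds to a unique accepted pair, whereas you first project to the set of lengths, compute its maximum $L$, and then separately enumerate factors of length at most $L$ using the linear-recurrence constant $C$.
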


\begin{proof}
We can easily write down a first-order formula asserting that
$w = {\bf x}[i..i+p-1]$ is primitive, that it is the first occurrence of this factor in $\bf x$, and that unboundedly large powers of $w$ appear in $\bf x$, as
follows:
$$ \forall m \ \exists j, n \ (n>m) \ \wedge\  \prim(i,p) \ \wedge\ \earliestfac(i,j,p) \ \wedge\ 
 \period(j,n,p), $$
 where
\begin{align*}
\prim(i,n) &:= \neg(\exists j\ (j>0) \ \wedge\ (j<n) \ \wedge\ \factoreq(i,i+j,n-j) \ \wedge\  \\
& \factoreq(i,(i+n)-j, j)). \\
\end{align*}
So by Theorem~\ref{bruyere} we can compute an automaton recognizing the finitely many pairs $i, n$.
\end{proof}

Finally, we prove two technical results, which together will play a key role in the proof of Theorem \ref{thm:main}.
\begin{prop}
\label{prop:setup}
Given a $k$-automatic sequence $\bf x$, and a nonempty factor $u = {\bf x}[i..i+d-1]$, and fixed
natural numbers $L, N$, one can decide if there exist a word $v$ of length $\geq N$, such that $u$ is neither a prefix or suffix of $v$, and natural numbers $p_1, p_2, \ldots, p_L$ such that $v u^{p_1} v u^{p_2} v \cdots v u^{p_L}$
is a prefix of $\bf x$.
\end{prop}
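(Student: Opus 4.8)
The plan is to reduce the existence question to the truth of a single first-order sentence over $\theory(\N,+,n\to{\bf x}[n])$ and then invoke Theorem~\ref{bruyere}(a). The crucial observation is that $v$ need not be quantified over as a word at all: since $v\,u^{p_1}\,v\,u^{p_2}\,v\cdots v\,u^{p_L}$ is required to be a \emph{prefix} of $\bf x$, its first $|v|$ letters are simultaneously $v$ and ${\bf x}[0..|v|-1]$, so $v$ is forced to equal ${\bf x}[0..m-1]$, where $m:=|v|$. Thus $v$ is completely determined by the single length parameter $m$, and the only remaining unknowns are $m$ together with the $L$ exponents $p_1,\ldots,p_L$. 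Because $L$ is fixed, this is a bounded collection of natural-number variables, which is exactly what keeps us inside the decidable fragment of Theorem~\ref{bruyere}.

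Next I would record the block positions as linear terms in $m,p_1,\ldots,p_L$. Writing $d:=|u|$, $s_0:=0$, and $s_j:=p_1+\cdots+p_j$, the $j$-th copy of $v$ begins at position $\alpha_j:=(j-1)m+s_{j-1}d$ and the $j$-th power $u^{p_j}$ begins at position $\beta_j:=jm+s_{j-1}d$. Since $j$ and $d$ are constants and $L$ is fixed, each $\alpha_j$ and $\beta_j$ is an integer-coefficient linear combination of the variables, hence expressible in $\theory(\N,+)$ (introducing the partial sums $s_j$ either directly or via auxiliary existentials $s_j=s_{j-1}+p_j$). The sentence I would build is
$$ \exists m, p_1, \ldots, p_L \; (m \geq N) \andd \Psi_{\mathrm{pre/suf}} \andd \bigwedge_{j=1}^{L} \factoreq(0, \alpha_j, m) \andd \bigwedge_{j=1}^{L} \bigl( \factoreq(i, \beta_j, d) \andd \period(\beta_j, p_j d, d) \bigr), $$
where $\Psi_{\mathrm{pre/suf}}$ asserts that $u$ is neither a prefix nor a suffix of $v$, e.g.\ $\bigl((m<d)\vee\neg\factoreq(0,i,d)\bigr)\andd\bigl((m<d)\vee\neg\factoreq(m-d,i,d)\bigr)$, the disjunctions handling the degenerate case in which $v$ is too short to contain $u$. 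Here each conjunct $\factoreq(0,\alpha_j,m)$ forces the $j$-th copy of $v$ to agree with the prefix ${\bf x}[0..m-1]$, while $\factoreq(i,\beta_j,d)\andd\period(\beta_j,p_jd,d)$ forces the $j$-th inner block to be exactly $u^{p_j}$. Since $\beta_j=\alpha_j+m$ and $\alpha_{j+1}=\beta_j+p_jd$, the blocks tile the length-$(Lm+s_Ld)$ prefix contiguously, so these conditions say \emph{precisely} that this prefix factors as $v\,u^{p_1}\cdots v\,u^{p_L}$.

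Finally, the displayed sentence has no free variables — the quantities $L$, $N$, $i$, and $d$ are all fixed constants supplied by the input — so Theorem~\ref{bruyere}(a) decides its truth, which is exactly the assertion of the proposition. I do not expect a deep obstacle here; the work is in the bookkeeping of step two, namely encoding the block positions $\alpha_j,\beta_j$ as Presburger terms, verifying that the local conditions exactly capture the global factorization, and cleanly treating the boundary cases $m<d$ (for $\Psi_{\mathrm{pre/suf}}$) and $p_j=0$ (an empty inner block, for which the period/prefix conditions are vacuous and $\alpha_{j+1}=\beta_j$). The conceptual heart, and the reason decidability survives despite $v$ and the $p_j$ being unbounded, is the first step: eliminating $v$ in favor of the single length variable $m$ keeps the number of quantified variables bounded independently of the lengths and exponents involved.
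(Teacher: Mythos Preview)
Your proposal is correct and follows essentially the same approach as the paper: encode the desired factorization as a single first-order sentence in $\theory(\N,+,n\to{\bf x}[n])$ by observing that $v={\bf x}[0..m-1]$ is determined by its length, existentially quantify over $m,p_1,\ldots,p_L$, and invoke Theorem~\ref{bruyere}(a). Your write-up is in fact slightly more careful than the paper's---you explicitly include the clause $\factoreq(i,\beta_j,d)$ forcing the period-$d$ block to begin with $u$ (rather than merely have period $d$), and you flag the $p_j=0$ and $m<d$ edge cases.
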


\begin{proof}
We first construct a first-order logical formula 
for prefix and suffix by
\begin{align*}
\prefx(i,j,x,y) &= j \leq y \andd \factoreq(i,x,j) \\
\suffx(i,j,x,y) &= j \leq y \andd \factoreq(i,y-j,j).
\end{align*}
The former asserts that ${\bf x}[i..i+j-1]$ is
a prefix of ${\bf x}[x..x+y-1]$, and the latter
asserts that ${\bf x}[i..i+j-1]$ is a suffix
of ${\bf x}[x..x+y-1]$.
Then a logical formula asserting
the existence of $v$ and the $p_i$ is as follows:
\begin{multline*}
r \geq N \andd 
\neg\prefx(i,d,0,r) \andd 
\neg\suffx(i,d,0,r) \andd 
\exists r, p_1, p_2, \ldots, p_L \\
\factoreq(0,r+p_1 d,r) \andd
\factoreq(0,2r+(p_1+p_2)d,r) \andd \cdots \andd \\
\factoreq(0,(L-1)r,(p_1+p_2+\cdots+p_{L-1})d) 
\andd \\
\period(r,p_1d,d) \andd \period(2r+p_1 d,p_2d,d) \andd \cdots \andd \period(Lr+(p_1+p_2+\cdots+p_{L-1})d, p_{Ld}, d) .
\end{multline*}
Notice that $d$ and $L$ are fixed constants, 
and that we are defining $v = {\bf x}[0..r-1]$
and $u = {\bf x}[i..i+d-1]$.  This statement is
true if and only if the desired $v$ exists.
If it is true, then we can easily find the
smallest $r$ for which it is true, just
as we did above in the proof of Corollary~\ref{simple}.
\end{proof}

\begin{prop}
\label{prop:setup2}
Let $k,m\ge 2$ be integers. Given a $k$-automatic sequence $\bf x$, and a binary word $i_0\cdots i_{m-1}\in \{0,1\}^m$, we can decide whether there exist nonempty factors $u_0$ and $v_1$ of $\bf x$ such that $u_0$ is neither a prefix nor suffix of $u_1$; $u_1$ is neither a prefix nor suffix of $u_0$; and such that 
$u_{i_0}\cdots u_{i_{m-1}}$ is a prefix of $\bf x$.
\end{prop}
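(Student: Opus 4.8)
The plan is to reduce the statement to a first-order sentence over $(\N, +, n \to {\bf x}[n])$ and then invoke Theorem~\ref{bruyere}(a), exactly as in the proof of Proposition~\ref{prop:setup}. The essential observation is that the binary pattern $i_0 \cdots i_{m-1}$ is a \emph{fixed} constant, so both the number of blocks and the arithmetic describing their starting positions are fixed data; only the two lengths $s = |u_0|$ and $t = |u_1|$ need to be quantified.

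First I would introduce existentially quantified length variables $s, t \geq 1$ and express the starting position of the $j$-th block. Writing $c_j$ for the number of indices $\ell < j$ with $i_\ell = 0$ and $d_j$ for the number with $i_\ell = 1$, the $j$-th block $u_{i_j}$ begins at position $P_j := c_j s + d_j t$. Since $c_j, d_j$ are known non-negative integers, each $P_j$ is a legitimate term (a constant multiple being an iterated sum), and the whole concatenation occupies ${\bf x}[0..P_m-1]$, which is automatically a prefix of $\bf x$ because it begins at position $0$. Next I would assert that the tiling is \emph{consistent}: letting $j_0$ and $j_1$ denote the first indices carrying $0$ and $1$ respectively, every block indexed $0$ must agree with the reference $0$-block, and likewise for the $1$-blocks. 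This is the conjunction
$$\bigwedge_{j\,:\, i_j = 0} \factoreq(P_{j_0}, P_j, s) \andd \bigwedge_{j\,:\, i_j = 1} \factoreq(P_{j_1}, P_j, t),$$
which has only finitely many conjuncts since $m$ is fixed. The common value of the $0$-blocks is then $u_0 = {\bf x}[P_{j_0}..P_{j_0}+s-1]$, and similarly $u_1 = {\bf x}[P_{j_1}..P_{j_1}+t-1]$.

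The four ``neither prefix nor suffix'' requirements are captured by negating the $\prefx$ and $\suffx$ predicates from Proposition~\ref{prop:setup}, with the starting indices $P_{j_0}, P_{j_1}$ substituted appropriately; for instance ``$u_0$ is not a prefix of $u_1$'' is $\neg\prefx(P_{j_0}, s, P_{j_1}, t)$, and the three remaining conditions are entirely analogous. Conjoining all of these with $s, t \geq 1$ and prefixing $\exists s \, \exists t$ yields a sentence that is true precisely when the desired $u_0, u_1$ exist; its truth is therefore decidable by Theorem~\ref{bruyere}(a). (Equivalently, one may keep $s, t$ free, build the corresponding DFA via Theorem~\ref{bruyere}(b), and test it for nonemptiness.)

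The main thing to get right here is bookkeeping rather than any genuine difficulty: one must verify that the coefficients $c_j, d_j$ are read off correctly from the given pattern and that the consistency conjunction really pins down single well-defined words $u_0$ and $u_1$. A minor loose end is the degenerate situation in which the pattern uses only one of the two symbols; then the block indexed by the missing symbol never occurs in the prefix, and one simply adjoins a separate existential clause positing that symbol as an arbitrary factor of $\bf x$ subject only to the distinctness constraints, which is again first-order expressible.
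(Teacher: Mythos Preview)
Your proposal is correct and follows essentially the same approach as the paper: encode the required conditions as a first-order sentence over $(\N,+,n\mapsto{\bf x}[n])$ and invoke Theorem~\ref{bruyere}. The only difference is bookkeeping---the paper introduces auxiliary position and length variables $p_0,\ldots,p_{m-1},q_0,\ldots,q_{m-1}$ together with separate factor indices $i,j$ for $u_0,u_1$, whereas you quantify only the two lengths $s,t$ and compute each starting position as the fixed linear combination $c_js+d_jt$; this is more economical but not a genuinely different argument.
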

\begin{proof} We can construct a first-order formula encoding these assertions.   The idea is that $u_0 = {\bf x}[i..i+r-1]$ and $u_1 = {\bf x}[j..j+s-1]$ for some $i,j,r,s$.   If $u_{i_0}\cdots u_{i_{m-1}}$ is a prefix of $\bf x$, then there exist starting positions $p_0=0, p_1, \ldots, p_{m-1}$ and lengths
$q_0, q_1, \ldots, q_{m-1}$ corresponding to each of the occurrences of the $u_{i_t}$.
We then assert that the starting positions obey the rule that $p_{t+1} = p_t + q_t$ for $0 \leq t < m-1$, and that each occurrence $u_{i_t}$ match ${\bf x}[i..i+r-1]$ or
${\bf x}[j..i+s-1]$, according to whether $i_t = 0$ or $i_t = 1$, respectively.   This gives us the following first-order formula:
\begin{align*}
&\exists i,j,r,s, p_0, p_1, \ldots, p_{m-1}, q_0, q_1, \ldots, q_{m-1}  \ 
r>0 \ \wedge\ s>0 \ \wedge \ \\
& \neg \prefx(i,r,j,s) \ \wedge\ 
\neg \suffx(i,r,j,s) \ \wedge\ 
\neg \prefx(j,s,i,r) \ \wedge\ 
\neg \suffx(j,s,i,r) \ \wedge \ \\
&\forall t<m \ ((i_t=0) \implies
q_t = r \ \wedge\ \factoreq(i, p_t, q_t)) \ \wedge\ \\
& ((i_t=1) \implies q_t = s \ \wedge\ \factoreq(j,p_t, q_t))) \ \wedge \\
& p_0 = 0 \ \wedge\ \forall t<m-1 \ 
p_{t+1} = p_t + q_t .
\end{align*}
\end{proof}
\section{Combinatorial lemmas}
In this section, we prove result about semigroup equations, which will again play a key role in our decision procedure for testing rank two.

Given a finite alphabet $\Sigma$, we write $a\le b$ for $a,b\in\Sigma^*$ if $a$ is a prefix of $b$.
We note that if $a$ and $b$ have the property that $a\not\le b$ and $b\not\le a$ then $a$ and $b$ generate a free semigroup and every word in $\Sigma^*$ has a unique largest prefix in $\{a,b\}^*$.
\begin{lemma} Let $\Sigma$ be a finite alphabet and let $r$ and $s$ be elements in $\Sigma^*$ that do not commute.  Then there exist $a,b\in\Sigma^*$ such that $a\not\le b$ and $b\not\le a$ and such that 
$r,s\in \{a,b\}^*$.
\label{lem:rsab}
\end{lemma}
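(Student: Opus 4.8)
The plan is to proceed by induction on $|r| + |s|$, using a Euclidean-style reduction that repeatedly cancels a common prefix. Throughout I will invoke the standard fact from combinatorics on words (Lyndon--Sch\"utzenberger) that two words commute if and only if they are both powers of a single common word; equivalently, if $r$ and $s$ do not commute then there is no word $w$ with $r,s\in\{w\}^*$. I also note at the outset that since $r$ and $s$ do not commute, neither can be empty, as the empty word commutes with everything; hence both words stay nonempty at every stage of the induction.

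First I would dispose of the base case. If neither $r$ is a prefix of $s$ nor $s$ is a prefix of $r$, then I simply take $a=r$ and $b=s$: by hypothesis these satisfy $a\not\le b$ and $b\not\le a$, and trivially $r,s\in\{a,b\}^*$.

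For the inductive step, assume without loss of generality that $r\le s$. Since $r\ne s$ (equal words commute), I can write $s=rt$ with $t$ nonempty and $|t|<|s|$. The crucial claim is that $r$ and $t$ again fail to commute: were they to commute, they would both be powers of a common word $w$, whence $r=w^i$ and $s=rt=w^{i+j}$ would both be powers of $w$, forcing $r$ and $s$ to commute and contradicting the hypothesis. Thus $r$ and $t$ are nonempty, non-commuting, and satisfy $|r|+|t|=|s|<|r|+|s|$, so the induction hypothesis yields prefix-incomparable $a,b$ with $r,t\in\{a,b\}^*$. Because $\{a,b\}^*$ is closed under concatenation, $s=rt\in\{a,b\}^*$ as well, and of course $r\in\{a,b\}^*$, which completes the step.

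Termination is immediate, since the total length $|r|+|s|$ strictly decreases under each reduction while both words remain nonempty, so the base case is reached after finitely many steps. The only genuine content is the claim that cancelling the shared prefix preserves non-commutativity, and this is the step I expect to require the most care; it is precisely where the characterization of commuting words as common powers is used, and it reduces to a short computation once that characterization is applied.
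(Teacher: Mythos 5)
Your proof is correct and takes essentially the same route as the paper's: both reduce a non-commuting pair $(r,s)$ with $r\le s$ to the pair $(r,t)$ where $s=rt$, observe that non-commutativity is preserved under this cancellation, and induct on $|r|+|s|$ (the paper phrases the induction as a minimal-counterexample argument, which is equivalent). The only cosmetic difference is that you invoke Lyndon--Sch\"utzenberger to show $r$ and $t$ do not commute, where a one-line computation suffices: if $rt=tr$ then $sr=(rt)r=r(tr)=r(rt)=rs$.
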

\begin{proof} Suppose towards a contradiction that the conclusion to the statement of the lemma does not hold.  Then among all counterexamples $(r,s)\in (\Sigma^*)^2$, we pick $(r,s)$ with $|r|+|s|$ minimal.  
Then either $r\le s$ or $s\le r$, or else we can take $a=r$ and $b=s$ and we obtain a contradiction. Thus we may assume without loss of generality that $r\le s$ and so we write $s=ry$.  Then $r,s\in \{r,y\}^*$.  Then either $|r|+|y|<|r|+|s|$ or $r=\epsilon$.  But we cannot have $r=\epsilon$ since $r$ and $s$ do not commute. Since $s$ and $r$ do not commute, $r$ and $y$ do not commute and by minimality of $|r|+|s|$, we have $\{r,y\}^*\subseteq \{a,b\}^*$ with $a\not\le b$ and $b\not\le a$.  But this is a contradiction, since $r,s \in \{r,y\}^*\subseteq \{a,b\}^*$.  The result follows.
\end{proof}
\begin{lemma}  
Let $\Sigma$ be a finite alphabet, let $u,v\in \Sigma^*$ be non-trivial words in $\Sigma^*$ such that there do not exist $a,b\in \Sigma^*$ with $|a|+|b|<|u|+|v|$ and with $u,v\in \{a,b\}^*$, and let $\sigma: \{x,y\}^*\to \{u,v\}^*$ denote the homomorphism from the free monoid on the set $\{x,y\}$ to the monoid generated by $u$ and $v$ given by $\sigma(x)=u$ and $\sigma(y)=v$.  Suppose that there exist words $d,d'\in \Sigma^*$, and words $w,w'\in \{x,y\}^*$ such that the following hold:
\begin{enumerate}
\item $d\sigma(w)= \sigma(w')d'$;
\item $w,w'$ both have $x$ as a prefix and at least one occurrence of $y$;
\item $d$ does not contain $u$ as a suffix and $|d|<\max(|u|,|v|)$.
\end{enumerate} 
Then $d=\epsilon$.
\label{lem:depsilon}
\end{lemma}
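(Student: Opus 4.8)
The plan is to set $W=\sigma(w)$, $W'=\sigma(w')$, and $Z=dW=W'd'$, and to argue by contradiction assuming $d\neq\epsilon$. First I would record the consequences of the hypothesis that no pair $a,b$ with $|a|+|b|<|u|+|v|$ generates $u,v$. If $u$ were a prefix or a suffix of $v$ (or $v$ of $u$), one could split off the overhang and produce a strictly shorter generating pair; so neither of $u,v$ is a prefix or a suffix of the other, $\{u,v\}$ is a prefix code, and $\sigma$ is injective with unique block factorizations. The same minimality forces $u$ and $v$ to be primitive, since $u=z^e$ with $e\ge 2$ would let $\{z,v\}$ generate $u,v$ with smaller total length (and symmetrically for $v$). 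Finally, $u$ and $v$ do not commute, so by Lemma~\ref{lem:rsab} they lie in a free sub-semigroup, a fact I will use to repackage any overlap relations that arise.

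Since $|W'|\ge |u|+|v|>\max(|u|,|v|)>|d|$, the word $W'$ is a prefix of $Z=dW$, and comparing the two readings gives $W'=d\cdot W[0..|W'|-|d|-1]$; in particular $d$ is a prefix of $W'\in\{u,v\}^*$. The whole argument now turns on where the prefix $d$ ends relative to the canonical block factorization of $W'$. The easy case is when $d$ ends exactly at a block boundary: then $d=\sigma(w_1')$ for a nonempty prefix $w_1'$ of $w'$, and since $w'$ begins with $x$, so does $w_1'$, whence $d$ begins with the block $u$ and $|d|\ge|u|$. Combined with $|d|<\max(|u|,|v|)$ this forces $|v|>|u|$ and $|d|<|v|$, so $w_1'$ cannot contain any $y$ (a $v$-block would push $|d|\ge|v|$), and therefore $d=u^t$ for some $t\ge 1$. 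But then $u$ is a suffix of $d$, contradicting hypothesis (3); so this case cannot occur under $d\neq\epsilon$.

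The main obstacle is the remaining case, where $d$ ends strictly inside some block $B\in\{u,v\}$ of $W'$. Here $W$, read from position $|d|$, begins a fresh $u$-block strictly inside $B$, producing an overlap between an occurrence of $u$ and the block $B$. If $B=u$, the synchronizing property of the primitive word $u$—which occurs in $uu$ only at positions $0$ and $|u|$, and interacts rigidly with an adjacent block—is violated unless the shift is trivial. If $B=v$, one extracts an occurrence of $u$ lying strictly inside $v$, together with the precise alignment data coming from the global equation $dW=W'd'$ and from the fact that both $w$ and $w'$ contain a $y$ (which pins down the leading runs $u^{b_0}v$ and $u^{a_0}v$). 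In either subcase the overlap yields a nontrivial word equation relating $u$ and $v$.

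To close the hard case I would feed the resulting common periods back through Lemma~\ref{lem:rsab} to obtain words $a,b$ with $a\not\le b$, $b\not\le a$, and $u,v\in\{a,b\}^*$, and then verify the length bound $|a|+|b|<|u|+|v|$, contradicting the minimality hypothesis on $(u,v)$. This makes the non-boundary case vacuous, and together with the boundary case it yields $d=\epsilon$. I expect the delicate point to be precisely this last reduction: showing that a short misalignment inside a $v$-block cannot be consistently continued along the equation without collapsing $u$ and $v$ onto a strictly shorter generating pair.
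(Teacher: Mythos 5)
There is a genuine gap: the case you yourself identify as the heart of the matter---where $d$ ends strictly inside a block $B$ of $W'$---is not proved, only planned. Your subcase $B=u$ rests on an unjustified claim: an occurrence of the primitive word $u$ starting strictly inside a $u$-block is \emph{not} by itself a contradiction, because that occurrence protrudes past the end of $B$ into the next block, which may be $v$; primitivity only forbids an internal occurrence of $u$ inside $uu$, and you have no control over the following block. Your subcase $B=v$ is deferred entirely to the sentence ``I would feed the resulting common periods back through Lemma~\ref{lem:rsab}\dots and then verify the length bound,'' and this closing mechanism is doubtful as stated: the overlap data relates $u$ to a \emph{factor} of $v$, so there is no reason the free pair $a,b$ you extract will satisfy $v\in\{a,b\}^*$, which is exactly what you would need to contradict minimality of $|u|+|v|$. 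Indeed, the situation you call degenerate genuinely occurs and must be argued through, not excluded.

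The paper's proof shows how to get around both obstacles, and its shape is instructive. Since $w$ and $w'$ both start with $x$, comparing prefixes of $z=d\sigma(w)=\sigma(w')d'$ gives the conjugacy relation $du=uc$; Lyndon--Sch\"utzenberger then yields $u=(rs)^{e}r$, $d=rs$, $c=sr$, and Lemma~\ref{lem:rsab} is applied to the witnesses $r,s$ (not to $u,v$), producing a free pair $\{a,b\}$ with $d,u,c\in\{a,b\}^*$. Crucially, the paper never claims $v\in\{a,b\}^*$: it writes $v=\gamma v'$ with $\gamma$ the longest prefix of $v$ in $\{a,b\}^*$, and when $v'\neq\epsilon$ it compares the longest $\{a,b\}^*$-prefix of $z$ computed from the two factorizations, anchored at the first occurrence of $y$ on each side ($w$ starts with $x^iy$, $w'$ with $x^jy$), obtaining $\alpha\beta^i\gamma=\beta^j\gamma$, hence $d=u^{j-i}$, hence $d=\epsilon$ by the no-$u$-suffix hypothesis. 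Minimality of $|u|+|v|$ is used only to force equalities in the boundary situations (when $r,s$ commute, or when $v'=\epsilon$), never to produce a strictly shorter generating pair---which is precisely the step your plan gambles on. Your preliminary observations (neither of $u,v$ is a prefix or suffix of the other, both are primitive, they do not commute) and your block-boundary case are correct, but without the conjugacy/anchoring argument the main case remains open.
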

\begin{proof} 
Let $z=d\sigma(w)= \sigma(w')d'$. By considering the prefix of $z$ of length $|du|$, we see $du = u c$ for some $c$ and hence by the Lyndon-Sch\"utzenberger theorem \cite{Lyndon&Schutzenberger:1962}
there exist words $r,s$ such that $u= (rs)^{\alpha} r$, $d=rs$, $c=sr$.  Notice that if $r$ and $s$ commute, then again by the Lyndon-Sch\"utzenberger theorem, they are powers of some word $t$, which then gives $u=t^i$ and $d=t^j$.  But now $\{u,v\}\in \{t,v\}^*$ and so $|t|+|v|=|u|+|v|$ by hypothesis and so $u=t$ and $i=1$.  But then since $u$ is not a suffix of $d$, we must have $d=\epsilon$.  Hence we may assume that $r$ and $s$ do not commute. 
Now by Lemma \ref{lem:rsab}, there are words $a$ and $b$ such that $a\not \le b$ and $b\not\le a$ such that the monoid generated by $r$ and $s$ is contained in the free monoid generated by $a$ and $b$.  Then there exist unique words $\alpha,\beta,\gamma\in \{a,b\}^*$ such that
$d=\alpha$, $u=\beta$ and $v=\gamma v'$ where $v'$ does not have $a$ or $b$ as a prefix.  If $v'=\epsilon$ then $u,v\in \{a,b\}^*$ and by assumption, we must have $u=a$ and $v=b$ after relabelling.  But now $d=\alpha$ is such that $|d|<\max(|a|,|b|)$ and since $\alpha\in \{a,b\}^*$, we see that $d$ is either a power of $a$ or a power of $b$.  But $d$ cannot be $a^i$ with $i\ge 1$ since $d$ does not have $u=a$ as a suffix and hence $d$ must be $b^j$.  But now the equation $du=uc$ gives $b^j a = ac$, which is impossible since $b\not\le a$ and $a\not\le b$.  Thus $v'\neq \epsilon$.

Then by assumption $w$ has at least one occurrence of $y$ and hence there is some $i\ge 1$ such that $x^iy$ is a prefix of $w$.  Then 
$cu^i v$ is a prefix of $z$ and so the longest prefix of $z$ in $\{a,b\}^*$ is $\alpha\beta^i \gamma$.  On the other hand, there is some $j\ge 1$ such that 
$w'$ has $x^jy$ as a prefix and so the longest prefix of $z$ in $\{a,b\}^*$ is $\beta^j\gamma$.  Then we must have
$\alpha\beta^i\gamma =\beta^j\gamma$.  It follows that $j\ge i$ since $\beta$ is non-trivial.  Cancelling $\beta^i\gamma$ on the right gives $\alpha = \beta^{j-i}$, which gives $d=\tau(\alpha)=\tau(\beta)^{j-i} = u^{j-i}$.  Since $d$ does not have $u$ as a suffix, we then see that $j=i$ and so $d=\epsilon$, as required.
\end{proof} 
\begin{notn}\label{notn:standing}
For the remainder of this section, we adopt the following notation and assumptions:
\begin{itemize}
    \item $u$ and $v$ are words with the property there do not exist $a,b\in \Sigma^*$ with $|a|+|b|<|u|+|v|$ and with $u,v\in \{a,b\}^*$; 
    \item we let $\sigma: \{x,y\}^*\to \{u,v\}^*$ denote the homomorphism from the free monoid on the set $\{x,y\}$ to the monoid generated by $u$ and $v$ given by $\sigma(x)=u$ and $\sigma(y)=v$. 
\end{itemize}
\end{notn}
\begin{prop} Let $w\in \{x,y\}^*$ be a word that contains at least five occurrences of $xy$ and let $z$ be a word in $\Sigma^*$ with $|z|=\max(|u|,|v|)$ and such that $z$ does not contains $u$ or $v$ as a prefix.  Then $\sigma(w) z$ is not a factor of a word in $\{u,v\}^{\omega}$. 
\label{prop:comb}
\end{prop}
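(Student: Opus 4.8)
The plan is to argue by contradiction. Suppose $\sigma(w)z$ is a factor of some $\mathbf{t}\in\{u,v\}^\omega$, so that we can write $\mathbf{t} = d\,\sigma(w)\,z\cdots$ where $d$ is the (finite) portion of $\mathbf{t}$ preceding the occurrence of $\sigma(w)z$. The key observation is that both $d\sigma(w)$ and the factorization of $\mathbf{t}$ give two competing descriptions of the same prefix of $\mathbf{t}$ in terms of $\{u,v\}$-blocks. My aim is to reduce to Lemma \ref{lem:depsilon} by extracting from the occurrence a semigroup equation $d\sigma(w)=\sigma(w')d'$ in which the hypotheses (1)--(3) of that lemma are met, forcing $d=\epsilon$, and then deriving a contradiction from the behavior of $z$.

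First I would normalize $d$. Since $\mathbf{t}\in\{u,v\}^\omega$, the prefix of $\mathbf{t}$ of length $|d\sigma(w)|$ is itself a product of $u$'s and $v$'s, say $\sigma(w')d'$ where $w'\in\{x,y\}^*$ records the complete blocks and $d'$ is a (possibly empty) proper prefix of the next block, with $|d'|<\max(|u|,|v|)$. Replacing $d$ by its image modulo full leading blocks, I may assume $|d|<\max(|u|,|v|)$ and that $d$ does not end in $u$ (if it did, I would absorb that $u$ into $\sigma(w)$, shortening $d$); this is exactly hypothesis (3). This yields an equation $d\sigma(w)=\sigma(w')d'$ of the form required by Lemma \ref{lem:depsilon}, provided I can also arrange hypothesis (2): that both $w$ and $w'$ begin with $x$ and contain at least one $y$. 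The assumption that $w$ contains at least five occurrences of $xy$ is what gives me room here — it guarantees $w$ has many $x$-to-$y$ transitions, so after trimming a bounded prefix of $w$ (to make it start with $x$) I still retain an occurrence of $y$, and simultaneously $w'$ inherits enough structure to start with $x$ and contain a $y$. Invoking Lemma \ref{lem:depsilon} then gives $d=\epsilon$.

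With $d=\epsilon$, the occurrence of $\sigma(w)z$ starts exactly at a block boundary of $\mathbf{t}$, so $\sigma(w)z$ is a prefix of a word in $\{u,v\}^\omega$. Now $\sigma(w)\in\{u,v\}^*$ is itself a full product of blocks, so $z$ must be a prefix of the remaining tail $u\cdots$ or $v\cdots$; in either case $z$ has $u$ or $v$ as a prefix (using $|z|=\max(|u|,|v|)$, whichever block comes next is entirely a prefix-comparison against $z$, and a word of length $\max(|u|,|v|)$ starting a $\{u,v\}^\omega$-tail must begin with the whole of the shorter admissible block or agree with the longer one on its full length). This contradicts the hypothesis that $z$ contains neither $u$ nor $v$ as a prefix, completing the argument.

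The main obstacle I anticipate is the bookkeeping in the second paragraph: precisely quantifying how many occurrences of $xy$ in $w$ are consumed when I trim the prefix and when I pass from $d$ to its normalized form, and checking that five occurrences genuinely suffice to guarantee that \emph{both} $w$ and $w'$ satisfy hypothesis (2) of Lemma \ref{lem:depsilon} simultaneously. The subtlety is that the two $\{u,v\}$-factorizations of the common prefix need not align block-for-block, so the relationship between the $xy$-count of $w$ and that of $w'$ requires care; the generous constant five is presumably chosen to absorb the worst-case misalignment, and verifying that this margin is adequate is where the real work lies.
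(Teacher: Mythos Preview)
Your architecture matches the paper's: set up an equation $d\,\sigma(\cdot)=\sigma(\cdot)\,d'$, invoke Lemma~\ref{lem:depsilon} to force $d=\epsilon$, then use the freeness of $\{u,v\}^*$ (which follows from Notation~\ref{notn:standing} via Lemma~\ref{lem:rsab}) to conclude that $z$ begins with $u$ or $v$. Your endgame paragraph is correct.

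The gap is exactly where you flag it, but it is a missing idea rather than bookkeeping. After you normalize so that $d$ is a proper prefix of the first block $\sigma(\xi_1)$ of $w'$, you have \emph{no control} over whether $\xi_1=x$ or $\xi_1=y$. If $\xi_1=y$ then $w'$ begins with $y$, and Lemma~\ref{lem:depsilon} (which needs both $w$ and $w'$ to begin with $x$) does not apply. Swapping the roles of $u,v$ would instead need $w$ to begin with $y$; but your normalization of $d$ (absorbing a trailing $u$ into $\sigma(w)$) forces $w$ to begin with $x$. No amount of trimming $w$ changes the first letter of $w'$, so the direct alignment at the left end cannot be made to satisfy hypothesis~(2) in both cases simultaneously.

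The paper resolves this by anchoring not at the start of $\sigma(w)$ but at an \emph{internal} $xy$ transition: write $w=w_0\,xy\,w_1$ with $w_0,w_1$ each containing at least two occurrences of $xy$ (this is exactly what five occurrences buy), locate the largest $i$ with $\sigma(\xi_1\cdots\xi_i)$ a prefix of $a\sigma(w_0)u$, and split on $\xi_{i+1}$. If $\xi_{i+1}=y$, the offset $d$ is a proper suffix of $u$ and one obtains an equation in which both sides begin with $v$; Lemma~\ref{lem:depsilon} applies with $x\leftrightarrow y$ swapped, using the $xy$'s in $w_1$ and a primitivity argument to ensure the $\xi$-side is not all $y$'s. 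If $\xi_{i+1}=x$, one passes to the opposite monoid (reversed words) and uses $w_0$ instead. The case split at a controlled $u\,|\,v$ boundary is the device that guarantees the starting letters on both sides of the equation match; your proposal does not yet contain this idea.
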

\begin{proof} Since $w$ has at least five occurrences of $xy$, we can write $w$ in the form $w_0xyw_1$ with $w_0$ and $w_1$ both having at least two occurrences of $xy$.

By assumption there is some word $w'=\xi_1\cdots \xi_s\in \{x,y\}^*$ such that $\sigma(w) z$ is a factor of $\sigma(w')$, and so we can write
$\sigma(w') = a \sigma(w_0) uv \sigma(w_1)zb$ for some words $a$ and $b$.  Then there must be some largest $i$ such that $w'':=\sigma(\xi_1)\cdots \sigma(\xi_i)$ is a prefix of $a\sigma(w_0)u$.
We now argue via cases.
\bigskip

\noindent{\bf Case I:} $\xi_{i+1}=y$.
\vskip 2mm
In this case, $w'' v$ is not a prefix of $a\sigma(w_0) u$, but it is a prefix of $a\sigma(w_0) uv$.  In particular, we can factor $u=u_0d$ such that $\sigma(w'')=a\sigma(w_0) u_0$ and we have
$$dv\sigma(w_1) z b = v \sigma(\xi_{i+2})\cdots \sigma(\xi_s).$$  Notice $|d|\le |u|\le \max(|u|,|v|)$, and we cannot have $|d|=|u|$, or else $u$ and $v$ would share a prefix, which cannot happen by hypothesis.  Thus $|d|<\max(|u|,|v|)$ and $d$ cannot have $v$ as a suffix, since otherwise $v$ would be a suffix of $u$ and so we could write $u=av$ and then $u,v\in \{c,v\}^*$ with $|c|+|v|<|u|+|v|$, which is a contradiction.  Also, we cannot have $d=\epsilon$, since $v\not\le u$ and $u\not\le v$, and this would imply that $w_1$ is a prefix of $\xi_{i+2}\cdots \xi_s$, and so $zb$ would have to be a prefix of some word of the form
$\sigma(\xi_j)\cdots \sigma(\xi_s)$.  But this would then say that either $u$ or $v$ is a prefix of $z$, since $|z|=\max(|u|,|v|)$ and this is a contradiction.

Now there is some smallest $j$ such that $w'':=v\sigma(\xi_{i+2})\cdots \sigma(\xi_j)$ contains $dv$ as a prefix.  Then if $\xi_{i+2}=\cdots = \xi_j=y$, then $v$ is a factor of $\sigma(\xi_{j-1})\sigma(\xi_j)=vv$ and since $v$ is primitive, this forces 
$dv=v\sigma(\xi_{i+2})\cdots \sigma(\xi_j)$ (see, e.g., \cite[p.~336]{Choffrut&Karhumaki:1997}).  In particular, this means $d$ would be a power of $v$, contradicting the fact that it cannot contain $v$ as a suffix.  

By assumption $w_1$ has at least one occurrence of $xy$ and so we may write $w_1=y^q x w_2$ for some $q\ge 0$ and $w_2$ containing at least one copy of $x$ and $y$.  Then there is some smallest $k$ such that
$v\sigma(\xi_{i+2})\cdots \sigma(\xi_k)$ contains $dv^{q+1}u$ as a prefix, and since $w_2$ contains a copy of at least one $x$ and $y$, we see that 
$v\sigma(\xi_{i+2})\cdots \sigma(\xi_k)$ is a prefix of $dv^{q+1}u \sigma(w_2)$.  Hence we can write
$$dv^{q+1}u \sigma(w_2)= v\sigma(\xi_{i+2})\cdots \sigma(\xi_k)d'.$$  Since $j\le k$, we see that $\xi_{\ell}=x$ for some $\ell\in \{i+2,\ldots,k\}$.
Then Lemma \ref{lem:depsilon} now gives that $d=\epsilon$, which we have ruled out.  Thus we have completed the proof in this case.
\bigskip

\noindent{\bf Case II:} $\xi_{i+1}=x$. \vskip 2mm
 In this case there is some word $d$ with $|d|\le |u|$ such that $\sigma(\xi_1)\cdots \sigma(\xi_i)ud=a\sigma(w_0) u$.  
 Arguing as in Case I, we can show that some letter in $w_0$ is equal to $y$ and that some $\xi_j$ with $j\le i$ is equal to $y$.  Then applying Lemma \ref{lem:depsilon} in the opposite monoid now gives $d=\epsilon$.  But we obtain a contradiction in this case as in Case I.  
 \bigskip
 This completes the proof.
 \end{proof}
\begin{defn}
 Let $p$ be a positive integer. We say that a word in $\{u,v\}^{\omega}$ is $p$-\emph{syndetic} if it has no factors of the form $u v^j u$ or $v u^j v$ with $j\ge p$. 
 \end{defn}
 
 \begin{lemma} Let $k\ge 2$ be a positive integer, let $\bf x$ be a $k$-automatic word, and let $u,v \in \Sigma^*$ be such that the assumptions of Notation \ref{notn:standing} hold. Suppose further that $\Fac(u^{\omega})\not\subseteq \Fac({\bf x})$ and $\Fac(v^{\omega})\not\subseteq \Fac({\bf x})$.  Then there is a computable integer $D=D({\bf x})>0$ such that if $\bf x$ has a prefix in $\{u,v\}^D$ then $\bf x$ is in $\{u,v\}^{\omega}$.
 \label{lem:D}
 \end{lemma}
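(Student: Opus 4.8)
The plan is to study the greedy left-to-right factorization of ${\bf x}$ into blocks from $\{u,v\}$. First I would observe that the minimality hypothesis in Notation~\ref{notn:standing} forces neither of $u,v$ to be a prefix of the other: if $v = uy$ then the pair $(u,y)$ (or $(u,\epsilon)$ when $y=\epsilon$) would contradict minimality. Hence, as in the remark preceding Lemma~\ref{lem:rsab}, $\{u,v\}$ freely generates its submonoid and every prefix of ${\bf x}$ has a \emph{unique} longest prefix lying in $\{u,v\}^*$. Let $E = \{\, n : {\bf x}[0..n-1]\in\{u,v\}^*\,\}$; the freeness guarantees that $E$ is totally ordered by the prefix relation and that consecutive elements differ by $|u|$ or $|v|$. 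If $E$ is infinite then ${\bf x}\in\{u,v\}^{\omega}$ and we are done, so assume $E$ has a largest element $\ell$, giving a maximal prefix $\sigma(W')$ with $W'\in\{x,y\}^*$. The hypothesis ``${\bf x}$ has a prefix in $\{u,v\}^{D}$'' says exactly that $W'$ has at least $D$ blocks, and the whole problem reduces to bounding the number of blocks of $W'$ by a constant depending only on ${\bf x}$.

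Next I would feed in the automatic structure. Since $\Fac(u^{\omega})\not\subseteq\Fac({\bf x})$ and $\Fac(v^{\omega})\not\subseteq\Fac({\bf x})$, each of $u,v$ occurs with bounded exponent, so by Theorem~\ref{expbound} there is a computable $B=B({\bf x})$ with $u^{B},v^{B}\notin\Fac({\bf x})$. Consequently $W'$ contains no run of $B$ equal letters (a run $x^{B}$ would give $u^{B}\in\Fac({\bf x})$, and similarly $y^{B}$), so once $W'$ has at least $11B$ blocks it must contain at least five occurrences of $xy$. By the maximality of $\ell$, the word $z := {\bf x}[\ell..\ell+\max(|u|,|v|)-1]$ has neither $u$ nor $v$ as a prefix. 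Thus $\sigma(W')\,z$ is a prefix, hence a factor, of ${\bf x}$, with $W'$ containing at least five occurrences of $xy$ and $|z|=\max(|u|,|v|)$ and $z$ having neither $u$ nor $v$ as a prefix. This is precisely the configuration ruled out by Proposition~\ref{prop:comb}: $\sigma(W')\,z$ is \emph{not} a factor of any word in $\{u,v\}^{\omega}$.

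The crux, and the step I expect to be the main obstacle, is to turn this into a bound on the number of blocks of $W'$. We have produced inside the single word ${\bf x}$ a factor that no word of $\{u,v\}^{\omega}$ can contain, and this must be shown incompatible with ${\bf x}$ being $k$-automatic once the $\{u,v\}$-prefix is long. The plan is to combine the uniform position bound $C=C({\bf x})$ used in the proof of Theorem~\ref{expbound} (every factor first occurs before position $C$ times its length) with the same base-$k$ pumping argument: reading the base-$k$ representation of the data recording the boundary $\ell$, the start $0$, and the length, a pumped repetition of an automaton state would reproduce the $\{u,v\}$-structured block pattern to the left of the break while fixing the offending tail $z$, yielding a word equation $d\,\sigma(w)=\sigma(w')\,d'$ of exactly the shape treated in Lemma~\ref{lem:depsilon}, in which the nonempty overhang $d$ encodes the break. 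Lemma~\ref{lem:depsilon} (equivalently, the mechanism behind Proposition~\ref{prop:comb}) then forces $d=\epsilon$, contradicting the existence of the break. The delicate point is precisely this interface: transferring the purely combinatorial non-factor statement of Proposition~\ref{prop:comb}, which lives in $\{u,v\}^{\omega}$, into a statement about ${\bf x}$ through pumping, while using the freeness of $\{u,v\}$ and the absence of long blocks (from $B$) to guarantee that the pumped factorization remains a genuine $\{u,v\}$-factorization. Carrying this through bounds the number of blocks of $W'$ by a computable function of $B({\bf x})$ and $C({\bf x})$, that is, of ${\bf x}$ alone; taking $D$ to be one more than this bound, a prefix in $\{u,v\}^{D}$ forces $E$ to be infinite and hence ${\bf x}\in\{u,v\}^{\omega}$.
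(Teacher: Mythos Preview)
Your setup is correct and matches the paper: the freeness of $\{u,v\}$, the maximal prefix $\sigma(W')$, the exponent bound $B$ forbidding long runs, and the tail $z$ of length $\max(|u|,|v|)$ with neither $u$ nor $v$ as a prefix. You also correctly identify Proposition~\ref{prop:comb} as the combinatorial engine.

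The gap is in the step you yourself flag as the crux. You apply Proposition~\ref{prop:comb} to the \emph{entire} word $\sigma(W')\,z$. That tells you $\sigma(W')\,z$ is not a factor of anything in $\{u,v\}^{\omega}$, but it is trivially a factor (indeed a prefix) of ${\bf x}$, so no contradiction yet. You then propose to extract one via a base-$k$ pumping argument that ``reproduces the $\{u,v\}$-structured block pattern''. This is where the plan breaks down: pumping in an automaton built from a first-order formula over $\langle \mathbb N,+,n\mapsto{\bf x}[n]\rangle$ changes numerical parameters, but there is no formula in this structure asserting that ${\bf x}[0..n-1]\in\{u,v\}^*$ for unknown $u,v$, and nothing guarantees that a pumped instance still carries a $\{u,v\}$-factorization. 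The promised equation $d\,\sigma(w)=\sigma(w')\,d'$ does not emerge from pumping alone; you would need to know that both sides lie in $\{u,v\}^*$, which is precisely what pumping does not preserve.

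The paper avoids this entirely by a much simpler trick you are missing: apply Proposition~\ref{prop:comb} not to the whole prefix but to a \emph{short tail}. Write $\sigma(W')=bc$ where $c$ consists of the last $O(p)$ blocks (enough to contain five occurrences of $uv$, guaranteed by $p$-syndeticity). Then $|cz|\le Kp\max(|u|,|v|)$ for an absolute constant $K$. Now use only Cobham's linear-recurrence constant $\kappa$ (what you call $C$): $cz$ must occur inside the prefix of ${\bf x}$ of length $\kappa|cz|$. But $cz$ is not a factor of any word in $\{u,v\}^{\omega}$, hence not a factor of $\sigma(W')$; so $|\sigma(W')|<\kappa|cz|\le K\kappa p\max(|u|,|v|)$. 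On the other hand, $p$-syndeticity gives at least $\lfloor d/p\rfloor$ copies of the longer of $u,v$ inside $\sigma(W')$, so $|\sigma(W')|\ge (d/p-1)\max(|u|,|v|)$. Comparing the two inequalities bounds the number of blocks $d$ by a computable function of $p$ and $\kappa$, i.e.\ of ${\bf x}$ alone. No pumping, no Lemma~\ref{lem:depsilon} beyond its role inside Proposition~\ref{prop:comb}, and no need to control $\{u,v\}$-structure under any automaton operation.
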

 \begin{proof}
By Theorem \ref{expbound}, there is some computable $p=p({\bf x})$ such that $u^p$ and $v^p$ are not factors of $\bf x$.  By a result of Cobham \cite{Cobham:1972}, there is a computable number $\kappa=\kappa({\bf x})$ such that every factor of $\bf x$ of length $L$ occurs in the prefix of $\bf x$ of length $\kappa L$.  
We now take $D=10 p^2\kappa+p+1$ and suppose that $\bf x$ has a prefix in $\{u,v\}^D$ but that it is not in $\{u,v\}^{\omega}$.  Then there is some largest $d\ge D$ such that 
$\bf x$ has a prefix in $\{u,v\}^d$.  Then $\bf x$ has a prefix of the form $az$ with $a\in \{u,v\}^d$ and $|z|=\max(|u|,|v|)$ and not having $u$ or $v$ as a prefix.  Now $a$ is necessarily $p$-syndetic since $u^p$ and $v^p$ are not factors of $w$.  It follows that every factor of $w$ in $\{u,v\}^{2p-1}$ must have at least one occurrence of $uv$, and so we can write $a=bc$ where $b\in \{u,v\}^{d-5(2p-1)}$ and $c\in \{u,v\}^{10p-5}$.  Then $c$ has at least five occurrences of $uv$ and so $cz$ is not a factor of an element of $\{u,v\}^{\omega}$ by Proposition \ref{prop:comb}.  In particular, $cz$ is not a factor of $a$.  Since $cz$ is a factor of $w$ of length at most 
$$(10p-5)\max(|u|,|v|)+|z| \le 10 p \max(|u|,|v|),$$ we see that $cz$ must occur in a prefix of $w$ of length $10p \kappa  \max(|u|,|v|)$.  Thus since $cz$ is not a factor of $a$, we must have that 
$$|a| < 10p \kappa  \max(|u|,|v|).$$  But $a\in \{u,v\}^d$ has no occurrences of $u^p$ or $v^p$ and hence it must have at least $\lfloor d/p\rfloor $ copies of $u$ and at least $\lfloor d/p \rfloor$ copies of $v$ and so 
$$|a| \ge (d/p-1) \max(|u|,|v|) \ge (D-p)\max(|u|,|v|)/p.$$
But this now gives $D < 10 p^2 \kappa+p+1$, a contradiction.
\end{proof}
\section{Proof of Theorem \ref{thm:main}}

We now give the decision procedure that makes up the content of Theorem \ref{thm:main}; namely, we show how to decide whether there exist finite words $u$ and $v$ such that ${\bf x}\in \{u,v\}^{\omega}$, when $\bf x$ is a $k$-automatic sequence. The procedure is divided into two cases, which depend upon whether one of the words $u$ or $v$ has arbitrarily large powers occurring as factors of $\bf x$.  The former case is dealt with via using the following lemma and proposition.

\begin{lemma} \label{lem:L}
Let $k\ge 2$ be a positive integer, let $\bf x$ be a $k$-automatic sequence, and let $u$ be a nontrivial factor of $\bf x$.  Then there is a computable number $L=L({\bf x})$ such that if there exists a prefix $v$ of $\bf x$ with $|v|\ge |u|$ such that
\begin{enumerate}
    \item 
$u$ is not a prefix nor suffix of $v$,
\item $v$ does not occur in $\bf x$ with unbounded exponent,
\item $v$ is not a factor of $u^{\omega}$,
\item there exist $p_1,\ldots ,p_L\ge 0$ with the property that 
$vu^{p_1} v u^{p_2}v\cdots v u^{p_L}$ is a prefix of $\bf x$,
\end{enumerate}
then ${\bf x} \in \{u,v\}^{\omega}$. 
\end{lemma}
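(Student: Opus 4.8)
The plan is to recast the conclusion as a statement about the greedy $\{u,v\}$-factorization of $\mathbf{x}$. First, since $|v|\ge|u|$ and $u$ is not a prefix of $v$ (hypothesis (1)), neither of $u,v$ is a prefix of the other, so $\{u,v\}$ is a prefix code: every word has at most one factorization over $\{u,v\}$, and the factorizations of nested prefixes are themselves nested. Hence $\mathbf{x}\in\{u,v\}^\omega$ if and only if $\mathbf{x}$ has prefixes lying in $\{u,v\}^*$ of unbounded length, i.e.\ the greedy parse never gets stuck. Next, because $v$ does not occur with unbounded exponent (hypothesis (2)), Theorem~\ref{expbound} yields a single computable $B=B(\mathbf{x})$ with $v^{B}\notin\Fac(\mathbf{x})$; crucially this is uniform in $v$, which is what will let $L$ be chosen independently of $v$. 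Finally I would reduce to the case that $u$ is primitive and that $u,v$ satisfy the minimality of Notation~\ref{notn:standing} (via Lemma~\ref{lem:rsab}), so that Proposition~\ref{prop:comb} is available; transporting the conclusion from a minimal generating pair back to $\{u,v\}$ is a point that would need to be checked.

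Suppose for contradiction the parse gets stuck, so $\mathbf{x}=a\,z_0$ with $a\in\{u,v\}^*$ the maximal parseable prefix and $z_0$ of length $|v|=\max(|u|,|v|)$ beginning with neither $u$ nor $v$. By hypothesis (4) and the nesting of prefix-code factorizations, $a$ contains at least $L$ blocks equal to $v$. The terminal factor of $a$ is a power $u^{\pi}$ followed by $z_0$, and since $u$ is not a suffix of $v$ (hypothesis (1)) this $u^{\pi}$ is a \emph{maximal} $u$-run. The decisive structural input is hypothesis (3): since $v\notin\Fac(u^{\omega})$ and $u$ is primitive, a sufficiently long power of $u$ occurring inside $a$ must be block-aligned with the $u$-blocks of $a$, so inside $a$ every maximal $u$-run is immediately followed by a $v$-block. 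As $z_0$ does not begin with $v$, the factor $u^{\pi}z_0$ therefore cannot occur inside $a$. Applying Cobham's computable constant $\kappa=\kappa(\mathbf{x})$ (every length-$\ell$ factor appears in the prefix of length $\kappa\ell$, as in the proof of Lemma~\ref{lem:D}) to this factor forces $|a|<(\kappa+1)(\pi|u|+|v|)$.

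Combined with the trivial lower bound $|a|\ge L|v|+\pi|u|$ from the $L$ blocks and the terminal run, the two inequalities give $(L-\kappa-1)|v|<\kappa\pi|u|$, i.e.\ they force the terminal exponent $\pi$ to be large (of order $L/\kappa$) but do \emph{not} by themselves yield a contradiction. This is the main obstacle, and it is exactly the feature that distinguishes this case from Lemma~\ref{lem:D}: because $u$ occurs with unbounded exponent, a single arbitrarily long terminal run $u^{\pi}$ is a priori consistent with the linear Cobham estimate, so bounding the number of $v$-blocks requires bounding $\pi$ itself.

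The hard part will therefore be to bound $\pi$. Here I would bring in Proposition~\ref{prop:comb} together with hypothesis (2): since $v^{B}\notin\Fac(\mathbf{x})$, runs of consecutive $v$-blocks have length $<B$, so once $L\ge 5B$ the factorization of $a$ contains at least five occurrences of the block pattern $uv$, and Proposition~\ref{prop:comb} shows that the factor formed by the last five such alternations followed by $z_0$ is not a factor of any word of $\{u,v\}^\omega$, hence not a factor of $a$. The aim is to play this five-alternation obstruction against the sharp maximal-run bound of the previous paragraph so as to pin $|a|$ both above and below, reconciling the possibly huge power $u^{\pi}$ with the linear bound and producing a contradiction for every $L$ exceeding an explicit threshold computable from $\kappa$ and $B$ alone (and thus uniform in $v$). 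Making this reconciliation precise — controlling the interaction between unbounded powers of $u$ and the Cobham bound — is the crux of the argument and the place where hypotheses (2) and (3) are used in tandem.
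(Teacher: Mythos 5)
Your setup coincides with the paper's: the same two computable constants (Cobham's $\kappa$ and the uniform exponent bound $p$ from Theorem~\ref{expbound}), and the same plan of applying Proposition~\ref{prop:comb} to a tail of the $\{u,v\}$-parse containing five occurrences of $vu$. But the argument is not complete, and the part you leave open is exactly the hard case. Your two inequalities only show that the terminal run $u^{\pi}$ must be long; ``playing the five-alternation obstruction against the maximal-run bound'' is stated as an aim rather than carried out, and it is not clear it can be, because Proposition~\ref{prop:comb} applied to a tail that contains a block $u^{\pi}$ with $\pi$ unbounded produces a factor whose length is not controlled, so Cobham's bound gives nothing. (A secondary problem: your claim that $u^{\pi}z_0$ cannot recur inside $a$ rests on an unquantified alignment assertion for ``sufficiently long'' powers of $u$, which fails for short runs; and your intended appeal to Lemma~\ref{lem:rsab} to transport the conclusion back from a minimal pair to $\{u,v\}$ is flagged but not done.)

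The paper's resolution of the hard case is the idea you are missing. It splits on whether any of the last $5p$ exponents satisfies $p_i|u|>2|v|$. If none does, the tail with five occurrences of $vu$ has length at most $(15p+4)|v|$, and Cobham plus Proposition~\ref{prop:comb} give the contradiction --- essentially your first case, which is fine. If some does, take the largest such index $j$ and \emph{truncate} that run to $u^{q}$ with $2|v|<q|u|\le 2|v|+|u|<3|v|$. The suffix $y'=u^{q}vu^{p_{j+1}}\cdots vu^{p_L}z$ then has length at most $3|v|(5p+1)$ (all later runs are short by maximality of $j$), so Cobham forces $y'$ to occur earlier, inside the $\{u,v\}$-parsed prefix, say $ay'b=\xi_1\cdots\xi_t$ with $\xi_i\in\{u,v\}$. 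Since $q|u|>2|v|\ge 2|u|$, the block $\xi_2$ lies entirely inside $u^{q}$; hypothesis (3) rules out $\xi_2=v$, so $\xi_2=u$ sits inside a power of $u$, and primitivity of $u$ forces the occurrence to be block-aligned, contradicting that $u$ is neither a prefix nor a suffix of $v$ and that $a$ is a proper nonempty piece of $\xi_1$. The point is that one never needs to bound $\pi$: windowing the long run down to length comparable to $|v|$ already yields a short factor to which Cobham applies. Without this (or an equivalent device) your proof does not close.
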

\begin{proof}
We recall that by a result of Cobham \cite{Cobham:1972} there is a computable number $\kappa=\kappa({\bf x})$ such that every factor of $\bf x$ of length $N$ occurs in the prefix of $\bf x$ of length $\kappa N$. Theorem \ref{expbound} gives that there is a computable number $p=p({\bf x})$ such that every factor of $\bf x$ has the property that it either occurs in $\bf x$ with exponent at most $p$ or it occurs with unbounded exponent. 

We take $L=(15p+4)\kappa$ and we claim that if there exists a prefix $v$ of $\bf x$ with $|v|\ge |u|$ such that $u$ is not a prefix nor suffix of $v$ and such that there are $p_1,\ldots ,p_L\ge 0$ with the property that 
$vu^{p_1} v u^{p_2}v\cdots v u^{p_L}$ is a prefix of $\bf x$ then ${\bf x}\in \{u,v\}^{\omega}$.

To see this, suppose towards a contradiction that ${\bf x}\not\in \{u,v\}^{\omega}$.  Then after possibly enlarging $L$, we may assume that $vu^{p_1} v u^{p_2}v\cdots v u^{p_L}$ is a prefix of $\bf x$ but neither $vu^{p_1} v u^{p_2}v\cdots v u^{p_L+1}$ nor $vu^{p_1} v u^{p_2}v\cdots v u^{p_L}v$ are prefixes of $\bf x$.  Then there is some word $z$ with $|z|=|v|$ such that $vu^{p_1} v u^{p_2}v\cdots v u^{p_L}z$ is a prefix of $\bf x$.

To complete the proof, we now look at cases.
\bigskip

\noindent {\bf Case I.} For each $i\in \{L-5p, \ldots ,L\}$, we have $|p_i|\cdot |u| \le |2v|$.  
\bigskip

In this case, 
$$vu^{p_{L-5p}} v \cdots v u^{p_L}$$ has at least five occurrences of $vu$, since $v^p$ is not a factor of $u$ and so by Proposition \ref{prop:comb},
$vu^{p_{L-5p}} v \cdots v u^{p_L}z$ is not a factor of a word in $\{u,v\}^{\omega}$.   Notice that the length of 
$$y:=vu^{p_{L-5p}} v \cdots v u^{p_L}z$$ is at most 
$|v| (15p+4)$, since each $u^{p_j}$ factor has length at most $2|v|$ and $|z|=|v|$.  Then by Cobham's result \cite{Cobham:1972} this word $y$ must occur in a prefix of $\bf x$ of length
$\kappa |v| (15p+4)$.  But since $vu^{p_1} v u^{p_2}v\cdots v u^{p_L}$ has length at least $L|v| > \kappa |v|(15p+4)$, we see that this cannot be the case.  
\bigskip

\noindent{\bf Case II.} There is some $i\in  \{L-5p, \ldots ,L\}$ such that $|p_i|\cdot |u| > |2v|$. 

\bigskip 
In this case, there is some maximal $i$ in this interval with this property, and we let $j$ denote this index.  Then there is some $q\le p_j$ such that $q|u| > 2|v| \ge (q-1)|v|$, and so $|u|^q \le 2|v|+ |u| < 3|v|$.   
Then we consider the suffix $y':=u^q v u^{p_{j+1}} v \cdots v u^{p_L} z$ of $vu^{p_1} v u^{p_2}v\cdots v u^{p_L}z$.  Notice that 
$|y'| \le 3|v|(1+L-j)+|v| \le 3|v|(5p+1)$.  Since $vu^{p_1} v u^{p_2}v\cdots v u^{p_L}$ has length at least $L|v| > \kappa |v|(15p+3)$, we see by Cobham's theorem \cite{Cobham:1972} that 
$y'$ must be a factor of $vu^{p_1} v u^{p_2}v\cdots v u^{p_L+1}$.  That is, there are words $a$ and $b$ such that 
$ay'b =\xi_1\cdots \xi_t$ with each $\xi_i\in \{u,v\}$.  Since $u\not\le v$ and $v\not\le u$, we may assume that neither $a$ nor $b$ is trivial, neither $u$ nor $v$ is a prefix of $a$, and neither $u$ nor $v$ is a suffix of $b$ and we may assume that $a$ is shorter than the length of $\xi_1$ and that $b$ is shorter than the length of $\xi_t$.  Then since $|u|^q > 2|v| \ge 2|u|$, we must have that $\xi_2$ is a factor of $u^q$.  By assumption $v$ is not a factor of $u^{\omega}$ and so $\xi_2$ must be $u$.  But now $\xi_2=u$ and so $u$ is in fact a factor of $u^2$.  Since $u$ is primitive, we know (see, e.g., \cite[p.~336]{Choffrut&Karhumaki:1997}) that if $a'ub'= u^2$, then either $a'=\epsilon$ or $b'=\epsilon$, and so we see that $\xi_1= au^i$ for some $i\ge 0$.  But now if $\xi_1=v$, then $u$ is a prefix of $v$, which is not allowed, and if $\xi_1=u$ either $a=\epsilon$ or $a=u$, neither of which is allowed. 
\bigskip 

This completes the proof.
\end{proof}

\begin{prop}\label{prop:uprim}
Let $\bf x$ be a $k$-automatic word and let $u$ be a primitive factor of $\bf x$ with the property that $\Fac(u^{\omega})\subseteq \Fac({\bf x})$. Then there is a decision procedure which decides whether there is a word $v$ such that ${\bf x}\in \{u,v\}^{\omega}$.
\end{prop}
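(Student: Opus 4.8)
The plan is to decide the existence of $v$ by a case analysis that funnels the ``generic'' situation into Lemma~\ref{lem:L} while peeling off a bounded number of degenerate configurations. First I would dispose of the ultimately periodic case: by Honkala's theorem this is decidable, and for an explicitly produced representation ${\bf x}=yw^\omega$ the rank-two question is a finite combinatorial check (in particular ${\bf x}=u^\omega$ is answered affirmatively). So assume ${\bf x}$ is aperiodic and suppose ${\bf x}\in\{u,v\}^\omega$ for some $v$. Then $v$ is not a power of $u$ and occurs infinitely often in the factorization. Writing $v=u^a\tilde v u^b$ with $a,b$ maximal, I replace $v$ by the trimmed word $\tilde v$: then $u$ is neither a prefix nor a suffix of $\tilde v$, $\tilde v\neq\epsilon$, and $\{u,v\}^*\subseteq\{u,\tilde v\}^*$, so ${\bf x}\in\{u,\tilde v\}^\omega$. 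If $|\tilde v|<|u|$ there are only finitely many candidates, each tested by the fixed-pair procedure below. Otherwise $|\tilde v|\geq|u|$, so $\{u,\tilde v\}$ is a free pair; letting $a_0$ be the (computable, finite) largest power of $u$ that is a prefix of ${\bf x}$ and ${\bf y}:={\bf x}[a_0|u|..]$ (again $k$-automatic), the forced leading block count gives ${\bf x}\in\{u,\tilde v\}^\omega$ iff ${\bf y}\in\{u,\tilde v\}^\omega$ with $\tilde v$ a prefix of ${\bf y}$. This is the configuration Lemma~\ref{lem:L} is built for.

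For the main case, where $\tilde v$ has bounded exponent in ${\bf x}$ and $\tilde v\notin\Fac(u^\omega)$, I would simply run the first-order search. Conditions (1)--(3) of Lemma~\ref{lem:L} are first-order expressible: ``$u$ is neither a prefix nor suffix of $\tilde v$'' is built into Proposition~\ref{prop:setup}, ``$\tilde v$ has bounded exponent'' is expressible through the computable bound of Theorem~\ref{expbound}, and ``$\tilde v\notin\Fac(u^\omega)$'' is a direct periodicity statement. Since $\tilde v$ occurs infinitely often, reading off the first $L=L({\bf y})$ occurrences of $\tilde v$ with the intervening $u$-runs shows condition (4) of Lemma~\ref{lem:L} holds for this $L$. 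Hence I apply Proposition~\ref{prop:setup} to ${\bf y}$, augmented by these constraints, to decide whether such a prefix $\tilde v$ together with exponents $p_1,\dots,p_L$ exists; if it does, Lemma~\ref{lem:L} certifies ${\bf y}\in\{u,\tilde v\}^\omega$ and I answer yes. Conversely, if a partner of this shape genuinely exists, the search detects one, so nothing of this form is missed.

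The work is in the \emph{entangled} cases, where $\tilde v$ is tied to $u$ and Lemma~\ref{lem:L} does not apply. If $\tilde v$ has unbounded exponent, its primitive root $w$ is one of the finitely many primitive words of unbounded exponent produced by Corollary~\ref{simple}, and ${\bf x}\in\{u,\tilde v\}^\omega\subseteq\{u,w\}^\omega$; so it suffices to decide ${\bf x}\in\{u,w\}^\omega$ for finitely many \emph{fixed} pairs. If instead $\tilde v\in\Fac(u^\omega)$, then $u$ and $\tilde v$ are both periodic of period $q:=|u|$ and $\tilde v=u_f^{\,c}p$, where $u_f$ is the conjugate of $u$ beginning at phase $f$ and $p$ is a prefix of $u_f$; when $p=\epsilon$ this again reduces to the fixed pair $(u,u_f)$. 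The fixed-pair subproblems are themselves decidable: for a fixed pair the set of valid cut points of ${\bf x}$ is $k$-automatic, because the running state of the (finite) parsing automaton reading ${\bf x}$ is $k$-automatic, so ``arbitrarily long prefixes parse'' becomes a first-order property to which Theorem~\ref{bruyere} applies; equivalently one may invoke the decidability of the monadic second-order theory of a morphic word.

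The genuine obstacle is the residual case $\tilde v=u_f^{\,c}p$ with $p$ a nonempty proper prefix of $u_f$: here there are unboundedly many candidate partners $\tilde v$ (as $c$ and $|p|$ vary), all periodic of period $q$, and none is eligible for Lemma~\ref{lem:L}, so the previous reductions to finitely many fixed pairs break down. My plan is to analyze ${\bf x}$ through its period-$q$ defect set $D=\{n\geq q:{\bf x}[n]\neq{\bf x}[n-q]\}$, which is $k$-automatic. In any factorization over such a pair the only period-$q$ defects occur at block boundaries adjacent to a $\tilde v$-block, so membership ${\bf x}\in\{u,\tilde v\}^\omega$ should be equivalent to a structural regularity of $D$ (defects confined to a syndetic family of positions of prescribed residues, with gaps consistent with a $\{u^{p_i},\tilde v\}$ tiling) that is checkable in first-order logic. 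Establishing this equivalence, essentially a periodic-partner analogue of Proposition~\ref{prop:comb} and Lemma~\ref{lem:L} together with Cobham's early-occurrence bound to convert an unbounded search into a bounded one, is the step I expect to be the main difficulty. Once it is in place, soundness of the procedure is immediate, and completeness follows because the case split above is exhaustive.
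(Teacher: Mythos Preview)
Your overall decomposition tracks the paper's proof closely: strip the maximal $u$-power prefix, reduce to a partner $\tilde v$ with $u$ neither prefix nor suffix, handle short $\tilde v$ and unbounded-exponent $\tilde v$ by enumeration over finitely many fixed pairs, and feed the generic case into Lemma~\ref{lem:L} via the first-order search of Proposition~\ref{prop:setup}. Your care in noting that conditions (2) and (3) of Lemma~\ref{lem:L} must be folded into the first-order formula is appropriate.

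The place where you go astray is the case $\tilde v\in\Fac(u^\omega)$, which you flag as the ``genuine obstacle'' and for which you sketch an elaborate (and admittedly incomplete) defect-set argument. In fact this case collapses immediately once you use something you have already set up: after stripping $u^{a_0}$, any admissible $\tilde v$ is a \emph{prefix} of $\mathbf{y}$. Since you have already disposed of the ultimately periodic case, $\mathbf{y}$ is aperiodic, so there is a unique \emph{longest} prefix $P$ of $\mathbf{y}$ lying in $\Fac(u^\omega)$, and $|P|$ is finite and computable (it is determined by the first position at which $\mathbf{y}$ breaks period $|u|$, a first-order condition). Every candidate $\tilde v$ with $\tilde v\in\Fac(u^\omega)$ is therefore one of the finitely many prefixes of $P$, and each such candidate is tested by your fixed-pair procedure. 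This is exactly how the paper handles it, in a single sentence. Your period-$q$ defect analysis is not wrong in spirit, but it is unnecessary; the missing idea is simply that ``$\tilde v$ is a prefix of $\mathbf{y}$'' already bounds the search.
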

\begin{proof}
We recall that by Corollary \ref{simple} there is a finite set $\{w_1,\ldots ,w_r\}$ of primitive factors of $\bf x$ that occur with unbounded exponents, which we can explicitly determine.  Then by assumption $u\in \{w_1,\ldots ,w_r\}$.  For two fixed words $a$ and $b$, it is straightforward to decide whether ${\bf x}\in \{a,b\}^{\omega}$, and so we may decide in the case that $v=w_j$ for some $j$ and when $|v|\le |u|$.  

Hence it suffices to deal with the case when $v\not\in \{w_1,\ldots ,w_r\}$ and $|v|>|u|$.  Moreover, by removing a prefix of $\bf x$ of the form $u^i$, we may assume without loss of generality that $u$ is not a prefix of $\bf x$; and we may assume without loss of generality that $u$ is neither a prefix nor a suffix of $v$ and that $v$ is primitive.  

Since $v$ is a prefix of $\bf x$ and since there is a unique longest prefix of $\bf x$ that is in ${\rm Fac}(u^{\omega})$, we can decide whether there exists $v\in {\rm Fac}(u^{\omega})$ such that ${\bf x}\in \{u,v\}^{\omega}$.  

Thus we may assume, in addition to the other assumptions given, that $v$ is not a factor of $u^{\omega}$.  It follows from Lemma \ref{lem:L} that there is a computable number $L=L({\bf x})$ such that if there exist $p_1,\ldots ,p_L\ge 0$ with the property that $v u^{p_1} v u^{p_2} \cdots v u^{p_L}$ is a prefix of $\bf x$, then ${\bf x}\in \{u,v\}^{\omega}$.  By Proposition \ref{prop:setup}, it is decidable whether $\bf x$ has a prefix of the form $v u^{p_1} v u^{p_2} \cdots v u^{p_L}$ for some $v$ having the desired constraints and some choice of $p_1,\ldots ,p_L$, and so we are done. 
\end{proof}

\begin{proof}[Proof of Theorem \ref{thm:main}]
We give the steps in the algorithm, which determines whether the rank of $\bf x$ is two.  Since the property of being periodic is decidable, we may assume that the rank of $\bf x$ is at least two. We note that if $\bf x$ is of rank two, then there exist words $u$ and $v$ such that ${\bf x}\in \{u,v\}^{\omega}$; then by picking such $(u,v)$ with $|u|+|v|$ minimal, we may assume without loss of generality that the assumptions from Notation \ref{notn:standing} hold.
\begin{enumerate}
\setlength{\itemindent}{.3in}
\item[Step 1.] Using Theorem~\ref{expbound}, compute $p=p(w)$ such that for every $u$ with the property that $u^p$ is a factor of $\bf x$ we have $\Fac(u^{\omega})\subseteq \Fac({\bf x})$.  

\item[Step 2.] By Corollary~\ref{simple}, there is a finite computable set of primitive words $\{w_1,\ldots, w_r\}$ such that $\Fac(w_i^{\omega}) \subseteq \Fac({\bf x})$ for $i=1,\ldots , r$.

\item[Step 3.] Use the decision procedure from Proposition \ref{prop:uprim} to decide whether there exists a word $u$ such that ${\bf x}\in \{w_i, u\}^{\omega}$ for some $u$ and some $i\in \{1,\ldots ,r\}$.  If such a $u$ exists, the algorithm halts and returns that $\bf x$ has rank two; if no such $u$ exists, we go to the next step. 
\item[Step 4.] It now suffices to decide whether there exist words $u,v$ such that $\bf x$ is in $\{u,v\}^{\omega}$ such that the assumptions of Notation \ref{notn:standing} apply to $u$ and $v$; moreover, in addition we may assume that $u,v\not\in \{w_1,\ldots ,w_r\}$, where $w_1,\ldots ,w_r$ are as in Step 2. Thus $u^p$ and $v^p$ are not factors of $\bf x$.  Then compute the integer $D$ given in the statement of Lemma \ref{lem:D}.  
\item[Step 5.] For each of the $2^D$ binary words $y$ of length $D$, use Proposition \ref{prop:setup2} to determine whether there exist $u$ and $v$ such that $y(u,v)$ is a prefix of $\bf x$; if there is some binary word for which this holds then $\bf x$ has rank two by Lemma \ref{lem:D} and we stop; if this does not hold for these words, then $\bf x$ has rank at least three and we stop. 
\end{enumerate}
\end{proof}

\newcommand{\noopsort}[1]{} \newcommand{\singleletter}[1]{#1}

 \end{document}